\newcommand\widecheck[1]{%
\savestack{\tmpbox}{\stretchto{%
  \scaleto{%
    \scalerel*[\widthof{\ensuremath{#1}}]{\kern-.3pt\bigwedge\kern-.3pt}%
    {\rule[-\textheight/3]{1ex}{\textheight}}
  }{\textheight}%
}{0.5ex}}%
\stackon[1pt]{#1}{\scalebox{-0.8}{\tmpbox}}%
}
\newcommand{\vertiii}[1]{{\left\vert\kern-0.25ex\left\vert\kern-0.25ex\left\vert #1 
    \right\vert\kern-0.25ex\right\vert\kern-0.25ex\right\vert}}
\newtheorem{theorem}{Theorem}
\newtheorem{lemma}{Lemma}
\newtheorem{assumption}{Assumption}
\def\cblue{\textcolor{black}}
\def\cred{\textcolor{black}}
\def\cmag{\textcolor{black}}
\newcommand{\boldh}{\boldsymbol{h}}
\newcommand{\bs}{\boldsymbol{s}}
\newcommand{\bu}{\boldsymbol{u}}
\newcommand{\bv}{\boldsymbol{v}}
\newcommand{\bw}{\boldsymbol{w}}
\newcommand{\bz}{\boldsymbol{z}}
\newcommand{\bx}{\boldsymbol{x}}
\newcommand{\by}{\boldsymbol{y}}
\newcommand{\bpsi}{\boldsymbol{\psi}}
\newcommand{\bH}{\boldsymbol{H}}
\newcommand{\cA}{\mathcal{A}}
\newcommand{\cB}{\mathcal{B}}
\newcommand{\cd}{\scriptstyle\mathcal{D}}
\newcommand{\cD}{\mathcal{D}}
\newcommand{\cE}{\mathcal{E}}
\newcommand{\cR}{\mathcal{R}}
\newcommand{\cH}{\mathcal{H}}
\newcommand{\cI}{\mathcal{I}}
\newcommand{\cJ}{\mathcal{J}}
\newcommand{\cL}{\mathcal{L}}
\newcommand{\cN}{\mathcal{N}}
\newcommand{\cP}{\mathcal{P}}
\newcommand{\cU}{\mathcal{U}}
\newcommand{\cu}{{\scriptscriptstyle\mathcal{U}}}
\newcommand{\cV}{\mathcal{V}}
\newcommand{\cw}{{\scriptstyle\mathcal{W}}}
\newcommand{\ccw}{{\scriptscriptstyle\mathcal{W}}}
\newcommand{\bcB}{\boldsymbol{\cal{B}}}
\newcommand{\bcH}{\boldsymbol{\cal{H}}}
\newcommand{\bcD}{\boldsymbol{\cal{D}}}
\newcommand{\bcF}{\boldsymbol{\cal{F}}}
\newcommand{\bcw}{\boldsymbol{\cw}}
\newcommand{\bwt}{\widetilde \bw}
\newcommand{\bcwt}{\widetilde\bcw}
\newcommand{\bcwb}{\overline\bcw}
\newcommand{\bcwc}{\widecheck{\bcw}}
\newcommand{\cwb}{\overline{\cw}}
\newcommand{\wb}{\overline{w}}
\newcommand{\bsb}{\overline{\bs}}
\newcommand{\bsc}{\widecheck{\bs}}
\newcommand{\expec}{\mathbb{E}}
\newcommand{\col}{\text{col}}
\newcommand{\tr}{\text{Tr}}
\newcommand{\diag}{\text{diag}}
\DeclareMathOperator*{\minimize}{minimize}
\DeclareMathOperator*{\st}{subject~to}
\newcolumntype{C}[1]{>{\centering\arraybackslash}m{#1}}
\begin{document}
\title{Adaptation and learning over networks under  subspace constraints -- Part I: Stability Analysis}
\author{Roula Nassif$^\dag$, \IEEEmembership{Member, IEEE},  Stefan Vlaski$^{\dag,\ddag}$, \IEEEmembership{Student Member, IEEE}, \\
Ali H. Sayed$^\dag$, \IEEEmembership{Fellow Member, IEEE}\\
\thanks{This work was supported in part by NSF grant CCF-1524250. A short conference version of this work appears in~\cite{nassif2019distributed}.}

\vspace{0.5cm}
\small{\linespread{0.2}$^\dag$ Institute of Electrical Engineering, EPFL, Switzerland}\\
\vspace{0.1cm}
\small{\linespread{0.2}$^\ddag$ Electrical Engineering Department, UCLA, USA}\\
\vspace{0.3cm}
roula.nassif@epfl.ch\hspace{0.5cm} stefan.vlaski@epfl.ch\hspace{0.5cm} ali.sayed@epfl.ch}

%

\maketitle

\begin{abstract}
This paper considers optimization problems over networks where agents have individual objectives to meet, or individual parameter vectors to estimate, subject to subspace constraints that \cblue{require} the objectives across the network  to lie in  low-dimensional \cblue{subspaces}. This constrained formulation includes consensus optimization as a special case, and allows for more general task relatedness models such as smoothness. While such formulations can be  solved via projected gradient descent, the resulting algorithm is not distributed. \cblue{Starting from} the centralized solution, we propose an iterative and distributed implementation of the projection step, which runs in parallel with the stochastic gradient descent update. 
We establish in this Part I of the work that, for small step-sizes $\mu$, the proposed distributed adaptive strategy leads to small estimation errors on the order of $\mu$. \cmag{We examine in the accompanying Part II~\cite{nassif2019adaptation}} the steady-state performance. The results will reveal explicitly the influence of the gradient noise, data characteristics, and subspace constraints, on the network performance. The results will also show that in the small step-size regime, the iterates generated by the distributed algorithm achieve the centralized steady-state performance. 
\end{abstract}

\begin{IEEEkeywords}
Distributed optimization, subspace projection, gradient noise, stability analysis.
\end{IEEEkeywords}

\newpage

\section{Introduction}
Distributed inference allows a collection of interconnected agents to perform parameter estimation tasks from streaming data by relying solely on local computations and interactions with immediate neighbors. Most prior literature focuses on \textit{consensus} problems, where agents with separate objective functions need to agree on a common parameter vector corresponding to the minimizer of the aggregate sum of the individual costs, namely, 
\begin{equation}
\label{eq: consensus optimization problem}
w^o=\arg\min_{w}\sum_{k=1}^N J_k(w),
\end{equation}
where $J_k(\cdot)$ is the cost function at agent $k$, $N$ is the number of agents in the network, and $w\in\mathbb{C}^L$ is the global parameter vector, which all agents need to agree upon--see Fig.~\ref{fig: illustration of the concept} (middle). Each agent seeks to estimate $w^o$ through local computations and communications among neighboring agents without the need to know any of  the costs besides their own. Among many useful strategies that have been proposed in the literature~\cite{sayed2013diffusion,sayed2014adaptation,sayed2014adaptive,bertsekas1997new,nedic2009distributed,dimakis2010gossip,chouvardas2011adaptive,braca2008enforcing}, diffusion strategies~\cite{sayed2013diffusion,sayed2014adaptation,sayed2014adaptive} are particularly attractive since they are scalable, robust, and enable continuous learning and adaptation in response to drifts in the location of the minimizer.  

However, there exist many network applications that require more complex models and flexible algorithms than consensus implementations since their agents may involve the need to estimate and track multiple distinct, though related, objectives.  For instance, in distributed power system state estimation, the local state vectors to be estimated at neighboring control centers may overlap partially since the areas in a power system are interconnected~\cite{korres2011distributed,kekatos2013distributed}. Likewise, in monitoring applications,  agents need to track the movement of multiple correlated targets and to exploit the correlation profile in the data for enhanced accuracy~\cite{chen2014multitask,khawatmi2016decentralized}. Problems of this kind, where nodes need to infer multiple, though related, parameter vectors, are referred to as multitask problems. Existing strategies to address multitask problems generally exploit prior knowledge on how the tasks across the network relate to each other~\cite{plata2017heterogeneous,korres2011distributed,chen2014multitask,nassif2018distributed,koppel2016proximity,hallac2015network,cao2018distributed,nassif2019regularization,chen2014diffusion,mota2015distributed,alghunaim2017distributed,platachaves2015distributed,sahu2018cirfe,kekatos2013distributed,nassif2017diffusion,alghunaim2018dual,hua2017penalty,barbarossa2009distributed,khawatmi2016decentralized}. For example, one way to model relationships among tasks is to formulate convex optimization problems with appropriate co-regularizers between neighboring agents~\cite{chen2014multitask,nassif2018distributed,koppel2016proximity,hallac2015network,cao2018distributed}. Graph spectral regularization can also be used in order to leverage more thoroughly the graph spectral information and improve the multitask network performance~\cite{nassif2019regularization}. In other applications, it may happen that the parameter vectors to be estimated at neighboring agents are related according to a set of linear equality constraints~\cite{mota2015distributed,alghunaim2017distributed,platachaves2015distributed,chen2014diffusion,nassif2017diffusion,kekatos2013distributed,sahu2018cirfe}.

\cmag{However, in this paper,  and the accompanying Part II~\cite{nassif2019adaptation},} we consider multitask inference problems where each agent seeks to minimize an individual cost (expressed as the expectation of some loss function), and where the collection of parameter vectors to be estimated across the network is required to lie in a low-dimensional subspace--see Fig.~\ref{fig: illustration of the concept} (left).  That is, we let $w_k\in{\mathbb{C}^{M_k}}$ denote some parameter vector at node $k$ and let $\cw=\col\{w_1,\ldots,w_N\}$ denote the collection of parameter vectors from across the network. We associate with each agent $k$ a differentiable convex cost $J_k(w_k):{\mathbb{C}^{M_k}}\rightarrow\mathbb{R}$, which is expressed as the expectation of some loss function $Q_k(\cdot)$ and written as $J_k(w_k)=\expec Q_k(w_k;\bx_k)$, where $\bx_k$ denotes the random data. The expectation is computed over the distribution of the data. Let $M=\sum_{k=1}^NM_k$. We consider constrained problems of the  form: 
\begin{equation}
\label{eq: constrained optimization problem}
\begin{split}
\cw^o=&\,\arg\min_{\ccw}~~ J^\text{glob}(\cw)\triangleq\sum_{k=1}^N J_k(w_k),\\
&\,\st~~\cw\in\cR(\cU),
\end{split}
\end{equation}
where $\cR(\cdot)$ denotes the range space operator, and $\cU$ is an $M\times P$ full-column rank matrix with  $P\ll M$. 
Each agent $k$ is interested in estimating  the $k$-th $M_k\times 1$ subvector $w^o_k$ of $\cw^o=\col\{w^o_1,\ldots,w^o_N\}$. In order to solve~\eqref{eq: constrained optimization problem} iteratively, the gradient projection method can be applied~\cite{bertsekas1999nonlinear}: 
\begin{equation}
\label{eq: centralized solution}
\cw_i=\cP_{\cu}\left(\cw_{i-1}-\mu\,\col\left\{\nabla_{w_k^*}J_k(w_{k,i-1})\right\}_{k=1}^N\right),\quad i\geq 0,
\end{equation}
where \cblue{$\cw_i=\col\{w_{1,i},\ldots,w_{N,i}\}$ with $w_{k,i}$ the estimate of $w^o_k$ at iteration $i$ and agent $k$}, $\mu>0$ is a small step-size parameter, $\nabla_{w_k^*}J_k(\cdot)$ is the (Wirtinger) complex gradient~\cite[Appendix~A]{sayed2014adaptation} of $J_k(\cdot)$ relative to $w^*_k$ (complex conjugate of $w_k$), and \cblue{$\cP_{\cu}$ is the projector onto the $P$-dimensional subspace of $\mathbb{C}^M$ spanned by the columns of $\cU$:
\begin{equation}
\label{eq: P_R(U)}
\cP_{\cu}=\cU(\cU^*\cU)^{-1}\cU^*,
\end{equation} 
where we used the fact that $\cU$ is a full-column rank matrix.}

 We are particularly interested in solving the problem in the \textit{stochastic} setting when the distribution of the data $\bx_k$ is generally unknown. This means that the risks $J_k(\cdot)$ and their gradients $\nabla_{w_k^*}J_k(\cdot)$ are unknown. As such, approximate gradient vectors need to be employed. A common construction in stochastic approximation theory is to employ the following approximation at iteration $i$:
\begin{equation}
\label{eq: stochastic approximation of the gradient}
\widehat{\nabla_{w_k^*}J_k}(w_k)={\nabla_{w_k^*}Q_k}(w_k;\bx_{k,i}),
\end{equation}
where $\bx_{k,i}$ represents the data observed at iteration $i$. The difference between the true gradient and its approximation is called  \textit{gradient noise}. This noise will seep into the operation of the algorithm and one main challenge is to show that despite its presence, agent $k$ is \cblue{still able} to approach $w^o_k$ asymptotically. 

Although the gradient update in~\eqref{eq: centralized solution} \cblue{and~\eqref{eq: stochastic approximation of the gradient}} can be performed locally at agent $k$, the projection operation requires a fusion center. To see this, let us introduce an intermediate variable $\psi_{k,i}$ at node $k$:
\begin{equation}
\psi_{k,i}=w_{k,i-1}-\mu\nabla_{w_k^*}J_k(w_{k,i-1}).
\end{equation}
After evaluating $\psi_{k,i}$ locally, each agent at each iteration needs to send its estimate $\psi_{k,i}$ to a fusion center, which performs the projection operation in~\eqref{eq: centralized solution} by computing $\cw_i=\cP_{\cu}\col\{\psi_{1,i},\ldots,\psi_{N,i}\}$, and then sends the resulting estimates $w_{k,i}$ back to the agents. While centralized solutions can be powerful, decentralized solutions are more attractive since they are more robust and respect the privacy policy at each agent~\cite{sayed2014adaptation}. Thus, a second challenge we face in this paper is how to carry out the projection through a \textit{distributed} network where each node performs local computations and exchanges information only with its neighbors. 

We  propose in Section~\ref{sec: Distributed inference under subspace constraints} an adaptive and distributed iterative algorithm allowing each agent $k$ to converge, in the mean-square-error sense, within $O(\mu)$ from the solution $w^o_k$ of~\eqref{eq: constrained optimization problem}, for sufficiently small $\mu$. Conditions on the network topology and signal subspace ensuring the feasibility of a distributed implementation are provided. We also show how some well-known network optimization problems, such as consensus optimization~\cite{sayed2013diffusion,sayed2014adaptation,sayed2014adaptive} and multitask smooth optimization~\cite{nassif2018distributed}, can be recast in the form~\eqref{eq: constrained optimization problem} and addressed with the strategy proposed in this paper. \cmag{The analysis in Section~\ref{sec: Stochastic performance analysis} of this Part~I shows that, for sufficiently small $\mu$, the proposed adaptive strategy leads to small estimation errors on the order of the small step-size. Building on the results of this Part~I, we shall derive in  Part~II~\cite{nassif2019adaptation} a closed-form expression for the steady-state network mean-square-error performance.} This closed form expression will reveal explicitly the influence of the data characteristics (captured by the second-order properties of the costs and second-order moments of the gradient noises) and subspace constraints (captured by $\cU$), on the network performance. The results will also show that, in the small step-size regime, the iterates generated by the distributed implementation achieve the centralized steady-state performance. 
\cmag{For illustration purposes, distributed sub-optimal beamforming is considered in Section~\ref{subsec: Distributed linearly-constrained-minimum-variance (LCMV) beamformer} \cmag{of this Part~I}. }

\noindent\textbf{Notation:} All vectors are column vectors. Random quantities are denoted in boldface. Matrices are denoted in capital letters while vectors and scalars are denoted in lower-case letters. We use the symbol $(\cdot)^\top$ to denote matrix transpose, the symbol $(\cdot)^*$ to denote matrix complex-conjugate transpose, and the symbol $\tr(\cdot)$ to denote trace operator. The symbol $\diag\{\cdot\}$ forms a matrix from block arguments by placing each block immediately below and to the right of its predecessor. The operator $\col\{\cdot\}$ stacks the column vector entries on top of each other. \cmag{The symbol $\otimes$ denotes} the Kronecker product. 
 The $M\times M$ identity matrix is denoted by~$I_M$.

\section{Distributed inference under subspace constraints}
\label{sec: Distributed inference under subspace constraints}
We move on to propose and study a distributed solution for solving~\eqref{eq: constrained optimization problem} with a continuous adaptation mechanism. The solution must rely on local computations and communications with immediate neighborhood, and operate in real-time on streaming data. To proceed with the analysis, one of the challenges we face is that the projection in~\eqref{eq: centralized solution} requires non-local exchange of information. Our strategy is to replace the $M\times M$ projection matrix $\cP_{\cu}$ in~\eqref{eq: centralized solution} by an $M\times M$ matrix $\cA$ that satisfies the following conditions:
\begin{numcases}{}
  \lim\limits_{i\rightarrow\infty}\cA^i=\cP_{\cu}, & \label{eq: condition 1 on A}\\
  A_{k\ell}=[\cA]_{k\ell}=0, \quad\text{if }\ell\notin\cN_k \text{ and } k\neq\ell,& \label{eq: condition 2 on A}
\end{numcases}
where $[\cA]_{k\ell}$ denotes the $(k,\ell)$-th block of $\cA$ of dimension $M_k\times M_{\ell}$ and $\cN_k$ denotes the neighborhood of agent $k$, i.e., the set of nodes connected to agent $k$ by an edge. The sparsity condition~\eqref{eq: condition 2 on A} characterizes the network topology and ensures local exchange of information at each time instant $i$. By replacing the projector $\cP_{\cu}$ in~\eqref{eq: centralized solution} by $\cA$ and the true gradients $\nabla_{w_k^*}J_k(\cdot)$ by their stochastic approximations, we obtain the following distributed adaptive solution at each agent~$k$:
\begin{equation}
\label{eq: distributed solution}
\left\lbrace
\begin{array}{rl}
\bpsi_{k,i}=&\hspace{-2mm}\bw_{k,i-1}-\mu\widehat{\nabla_{w_k^*}J_k}(\bw_{k,i-1}),\\
\bw_{k,i}=&\hspace{-2mm}\sum\limits_{\ell\in\cN_k}A_{k\ell}\bpsi_{\ell,i},
\end{array}
\right.
\end{equation}
where we used condition~\eqref{eq: condition 2 on A}, and where $\bpsi_{k,i}$ is an intermediate estimate and $\bw_{k,i}$ is the estimate of $w^o_k$ at agent $k$ and iteration $i$. As we shall see in Section~\ref{sec: Stochastic performance analysis}, condition~\eqref{eq: condition 1 on A} helps \cblue{ensure} convergence toward the optimum. Necessary and sufficient conditions for the matrix equation~\eqref{eq: condition 1 on A} to hold are given in the following lemma.
\begin{lemma}{\emph{(Necessary and sufficient conditions for~\eqref{eq: condition 1 on A})}} 
\label{lemm: matrix equation proposition}
The matrix equation~\eqref{eq: condition 1 on A} holds, if and only if, the following conditions on the projector $\cP_{\cu}$ and the matrix $\cA$ are satisfied:
\begin{align}
\cA\cP_{\cu}=\cP_{\cu},\label{eq: first condition in the proposition}\\
\cP_{\cu}\cA=\cP_{\cu},\label{eq: second condition in the proposition}\\
\rho(\cA-\cP_{\cu})<1,\label{eq: third condition in the proposition}
\end{align}
where $\rho(\cdot)$ denotes the spectral radius of its matrix argument. It follows that any $\cA$ satisfying condition~\eqref{eq: condition 1 on A} has  one as an eigenvalue with multiplicity $P$, and all other eigenvalues are strictly less than one in magnitude.
\end{lemma}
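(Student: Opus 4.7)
The plan is to set $B := \cA - \cP_\cu$ and exploit the idempotency $\cP_\cu^2 = \cP_\cu$ of the orthogonal projector defined in~\eqref{eq: P_R(U)}. First, I would observe that, upon subtracting $\cP_\cu^2 = \cP_\cu$ from both sides of the identities, condition~\eqref{eq: first condition in the proposition} is equivalent to $B\cP_\cu = 0$ and condition~\eqref{eq: second condition in the proposition} is equivalent to $\cP_\cu B = 0$. Combined with $\cP_\cu^2 = \cP_\cu$, a short induction on $i$ then yields the key identity
\begin{equation*}
\cA^i = (\cP_\cu + B)^i = \cP_\cu + B^i, \qquad i \geq 1,
\end{equation*}
since every cross-term arising in the expansion contains a factor $\cP_\cu B$ or $B\cP_\cu$ and therefore vanishes.

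The ``if'' direction is then immediate: assuming~\eqref{eq: first condition in the proposition}--\eqref{eq: third condition in the proposition}, the bound $\rho(B)<1$ gives $B^i\to 0$, hence $\cA^i \to \cP_\cu$. For the ``only if'' direction, I would start from $\cA^i \to \cP_\cu$ and right-multiply this limit by $\cA$: this gives $\cA^{i+1} \to \cP_\cu \cA$, but also $\cA^{i+1} \to \cP_\cu$ by the same hypothesis, whence $\cP_\cu \cA = \cP_\cu$, which is~\eqref{eq: second condition in the proposition}; a symmetric argument using left-multiplication yields~\eqref{eq: first condition in the proposition}. With \eqref{eq: first condition in the proposition} and~\eqref{eq: second condition in the proposition} established, the identity above applies, and $\cA^i \to \cP_\cu$ forces $B^i \to 0$, which is equivalent to $\rho(B)<1$ (a standard consequence of the Jordan form).

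For the final spectral claim, I would pass to an orthonormal basis of $\mathbb{C}^M$ whose first $P$ vectors span $\cR(\cU)$. In this basis $\cP_\cu$ takes the block form $\diag\{I_P,0\}$, and the identities $B\cP_\cu = \cP_\cu B = 0$ force $B$ to be block-diagonal with a zero leading $P\times P$ block, i.e.\ $B = \diag\{0, B'\}$ for some $(M-P)\times(M-P)$ matrix $B'$. Hence $\cA = \diag\{I_P, B'\}$, so the spectrum of $\cA$ is the eigenvalue $1$ with multiplicity $P$ together with the eigenvalues of $B'$; the latter coincide with the nonzero eigenvalues of $B$ and are therefore strictly less than one in modulus by~\eqref{eq: third condition in the proposition}. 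The only delicate point in the whole argument is the standard implication $B^i\to 0 \Rightarrow \rho(B)<1$ invoked in the converse direction; everything else reduces to straightforward algebra once the identity $\cA^i = \cP_\cu + B^i$ is established.
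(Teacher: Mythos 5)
Your proof is correct, and while the sufficiency half is essentially the paper's argument in disguise (your identity $\cA^i=\cP_{\cu}+B^i$ with $B=\cA-\cP_{\cu}$ is the same as the paper's chain $\cA^i-\cP_{\cu}=(\cA-\cP_{\cu})^i$, obtained by killing cross-terms via $B\cP_{\cu}=\cP_{\cu}B=0$), the necessity half and the final spectral claim follow a genuinely different and more elementary route. The paper invokes the classical characterization that $\lim_{i\rightarrow\infty}\cA^i$ exists if and only if $\cA$ admits a Jordan form $\cV\,\diag\{I_K,\cJ\}\,\cV^{-1}$ with $\rho(\cJ)<1$, and then reads off conditions~\eqref{eq: first condition in the proposition}--\eqref{eq: third condition in the proposition} and the multiplicity $K=P$ from the rank of the limit $\sum_{m=1}^K v_m y_m^*$. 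You instead extract \eqref{eq: first condition in the proposition}--\eqref{eq: second condition in the proposition} directly by comparing the two limits of $\cA^{i+1}$, and then get \eqref{eq: third condition in the proposition} from $B^i\rightarrow 0\Rightarrow\rho(B)<1$; the spectral claim follows from the block form $\cA=\diag\{I_P,B'\}$ in a basis adapted to $\cR(\cU)$. Your route avoids the Jordan-form machinery (and the citation to Oldenburger) at the cost of leaning on the idempotency $\cP_{\cu}^2=\cP_{\cu}$, which is harmless here since $\cP_{\cu}$ is defined as an orthogonal projector in~\eqref{eq: P_R(U)} (and could in any case be recovered from $\cA^{2i}\rightarrow\cP_{\cu}^2=\cP_{\cu}$). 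Two cosmetic nits: the eigenvalues of $B'$ are a subset of, rather than exactly ``the nonzero eigenvalues of,'' those of $B=\diag\{0,B'\}$ (the conclusion $\rho(B')\leq\rho(B)<1$ is unaffected); and it is worth stating explicitly that $1$ has multiplicity exactly $P$ because $\rho(B')<1$ excludes $1$ from the spectrum of $B'$.
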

\begin{proof}
See \cred{Appendix~\ref{app: proof of lemma 1}}. The arguments are along the lines developed in~\cite{xiao2004fast} for distributed averaging with proper adjustments to handle general subspace constraints. 
\end{proof}
Note that conditions~\eqref{eq: first condition in the proposition}--\eqref{eq: third condition in the proposition} appeared previously   (with proof omitted) in the context of  distributed denoising in wireless sensor networks~\cite{barbarossa2009distributed}. In such problems, the $N$ sensors are observing $N$-dimensional signal, with each entry of the signal corresponding to one sensor. Using the prior knowledge that the observed signal belongs to a low-dimensional subspace, the sensor task is to denoise the corresponding entry of the signal by projecting  in a distributed iterative manner onto the signal subspace in order to improve the error variance. However, in this work, we consider the more general problem of distributed inference over networks.

\cblue{If we replace $\cP_{\cu}$ by~\eqref{eq: P_R(U)}, multiply both sides of~\eqref{eq: first condition in the proposition} by $\cU$, and multiply both sides of~\eqref{eq: second condition in the proposition} by $\cU^*$, conditions~\eqref{eq: first condition in the proposition} and~\eqref{eq: second condition in the proposition} reduce to:}
\begin{align}
\cA \,\cU&=\cU,\label{eq: first condition on eigenvector}\\
\cU^*\cA&=\cU^*.\label{eq: second condition on eigenvector}
\end{align}
Conditions~\eqref{eq: first condition on eigenvector} and~\eqref{eq: second condition on eigenvector} state that the $P$ columns of $\cU$ are right and left eigenvectors of $\cA$ associated with the eigenvalue $1$. Together with these two conditions, condition~\eqref{eq: third condition in the proposition} means that $\cA$ has $P$ eigenvalues at one, and that all other eigenvalues are strictly less than one in magnitude.
\begin{figure*}
\centering
\includegraphics[scale=0.33]{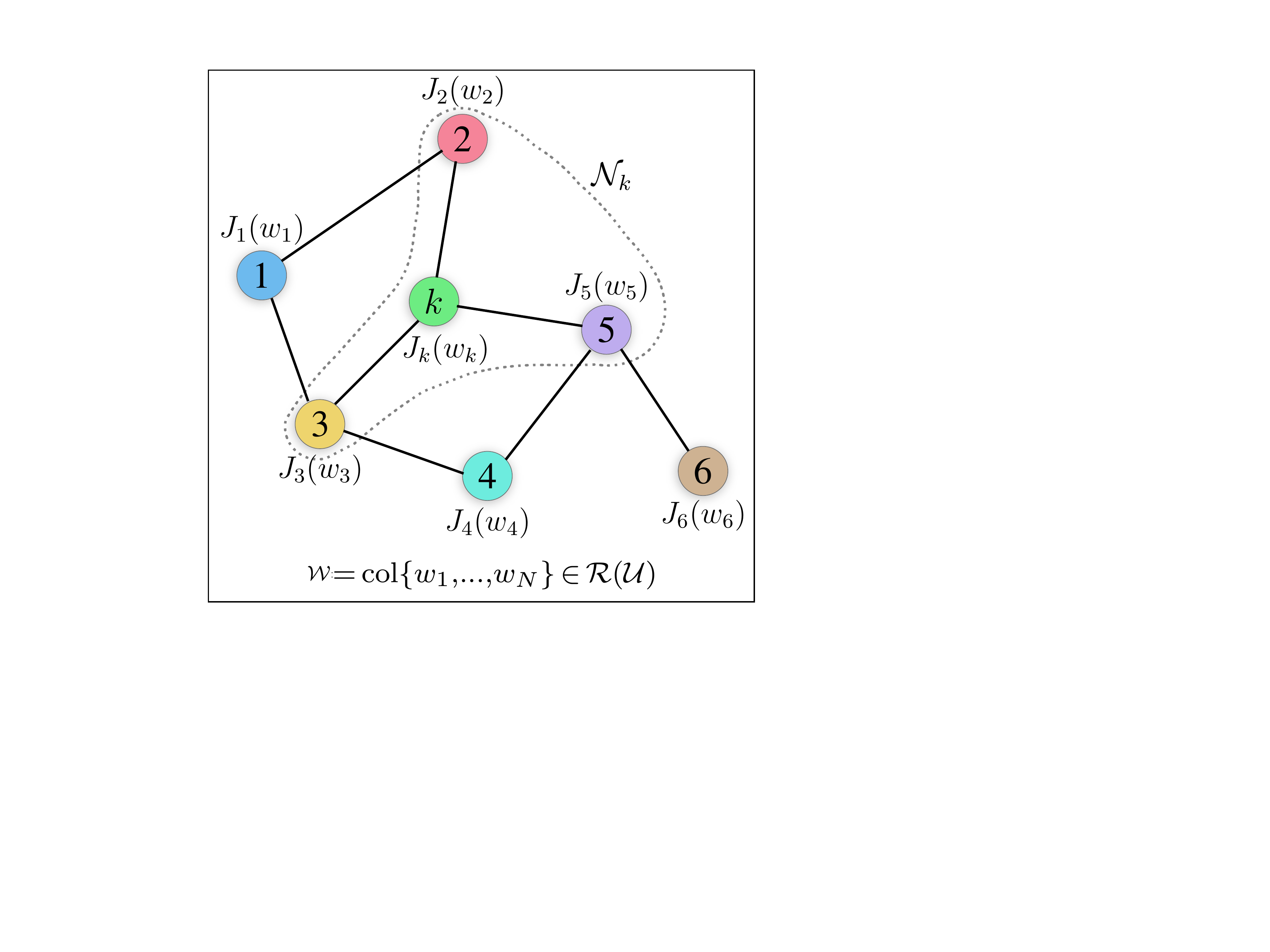}\quad
\includegraphics[scale=0.33]{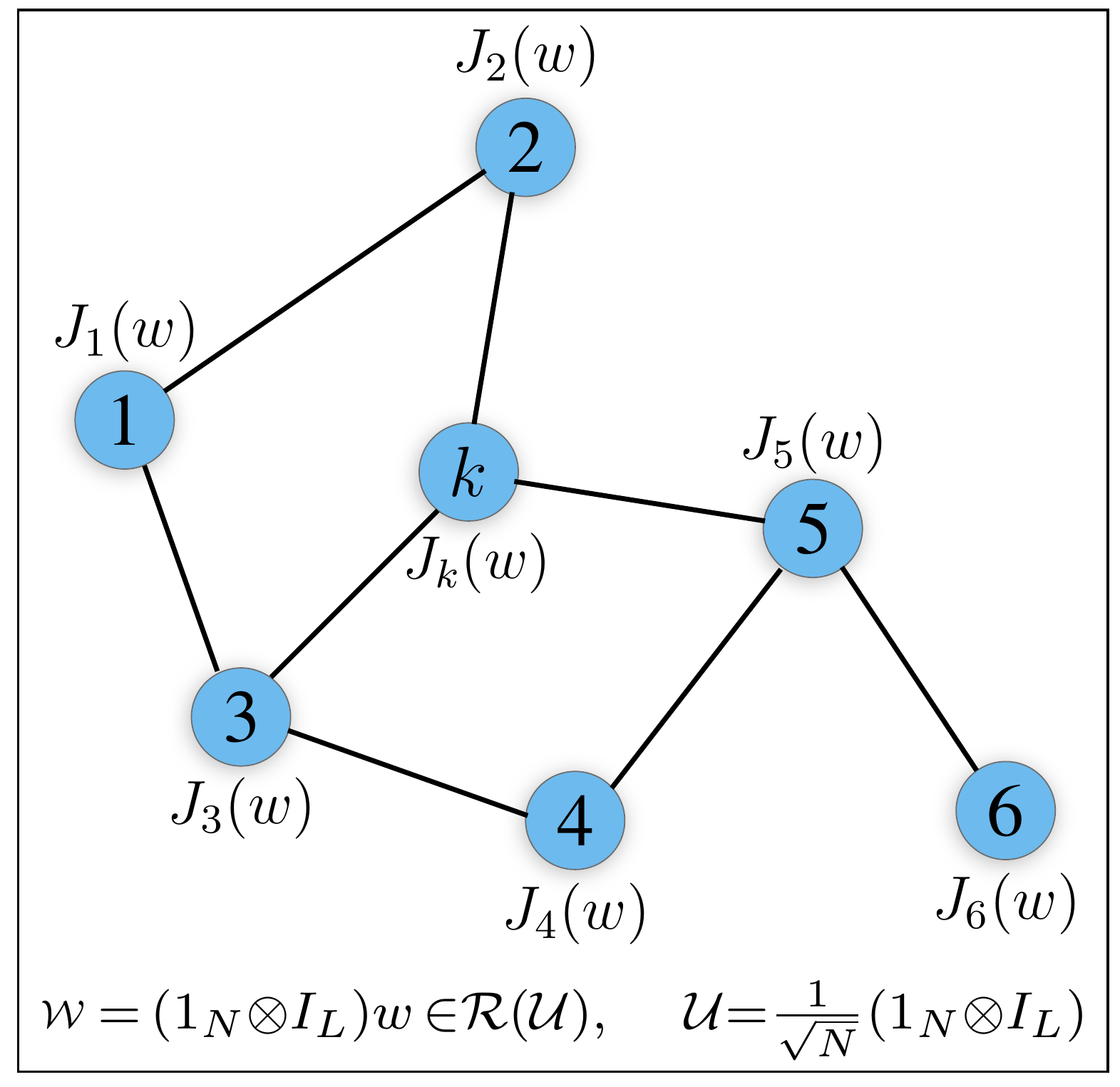}\quad
\includegraphics[scale=0.33]{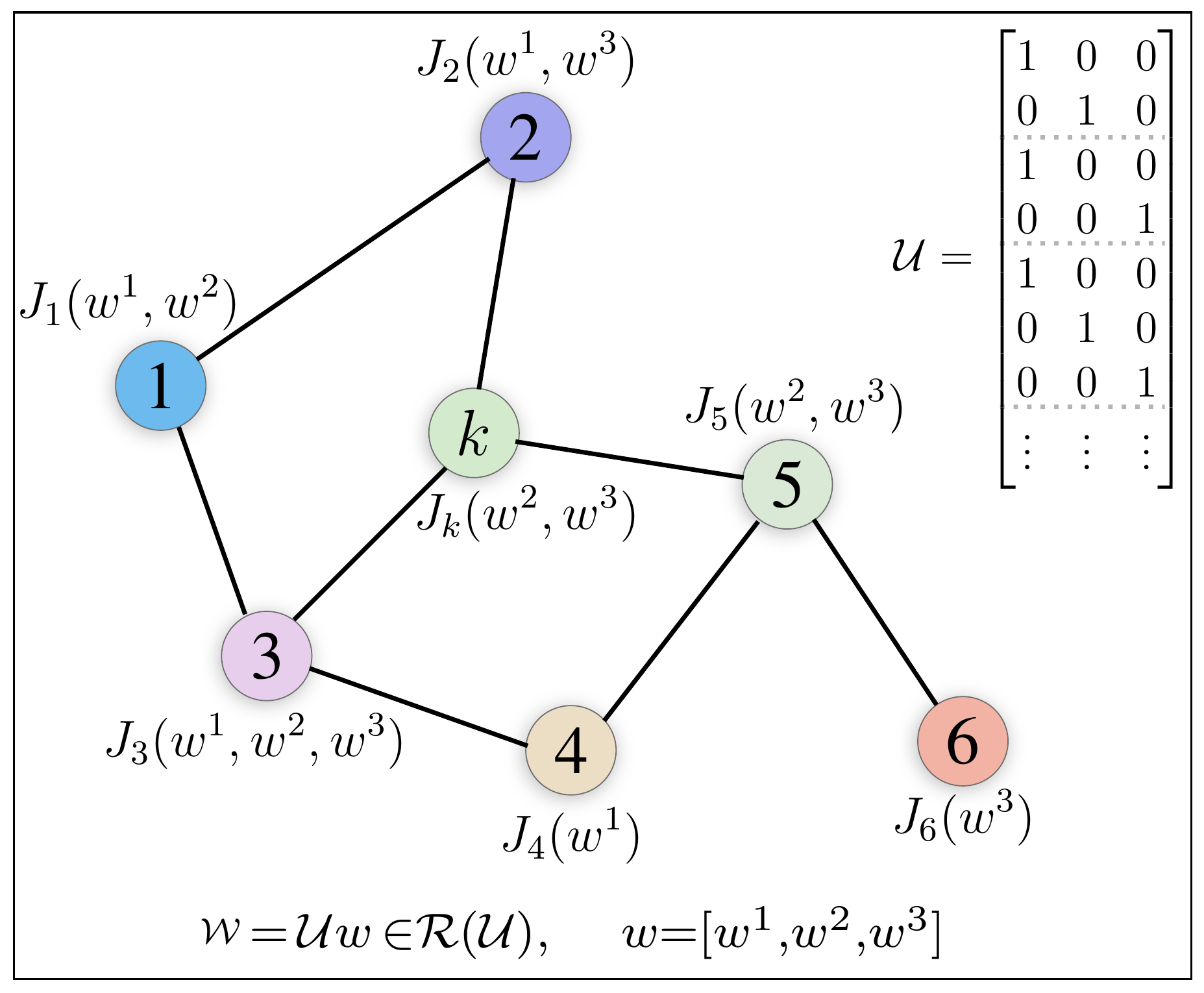}
\caption{Inference under subspace constraints. \textit{(Left)} Illustrative scheme of problem~\eqref{eq: constrained optimization problem}: Each agent $k$ in the network has an individual  $w_k$ to estimate, subject to subspace constraints that enforce the objectives across the network to lie in $\cR(\cU)$. \textit{(Middle)} Consensus optimization~\eqref{eq: consensus optimization problem}: Agents in the network seek to estimate an $L\times 1$ common vector $w$ corresponding to the minimizer of the aggregate sum of individual costs. \textit{(Right)} Coupled optimization~\cite{alghunaim2017distributed}: Different agents generally seek to estimate different, but overlapping, parameter vectors.}
\label{fig: illustration of the concept}
\end{figure*}
In the following, we discuss how some well-known network optimization problems can be recast in the form~\eqref{eq: constrained optimization problem} and addressed with strategies in the form of~\eqref{eq: distributed solution}.

\noindent\textbf{Remark 1.} (Distributed consensus optimization). Let $M_k=L$ for all agents. If we set in~\eqref{eq: constrained optimization problem} $P=L$ and $\cU=\frac{1}{\sqrt{N}}(\mathds{1}_N\otimes I_L)$ where $\mathds{1}_N$ is the $N\times 1$ vector of all ones, then solving problem~\eqref{eq: constrained optimization problem} will be equivalent to solving the well-known consensus problem~\eqref{eq: consensus optimization problem}. Different algorithms for solving~\eqref{eq: consensus optimization problem} over strongly-connected networks have been proposed~\cite{sayed2013diffusion,sayed2014adaptation,sayed2014adaptive,nedic2009distributed,bertsekas1997new,dimakis2010gossip,chouvardas2011adaptive}. By picking any $N\times N$ doubly-stochastic matrix $A=[a_{k\ell}]$ satisfying:
\begin{equation}
\label{eq: conditions on A for diffusion}
a_{k\ell}\geq 0, A\mathds{1}_N=\mathds{1}_N,\mathds{1}^\top_NA=\mathds{1}^\top_N, a_{k\ell}= 0~\text{if }\ell\notin\cN_k\text{ and } k\neq\ell
\end{equation} 
the diffusion strategy for instance takes the form~\cite{sayed2013diffusion,sayed2014adaptation,sayed2014adaptive}:
\begin{equation}
\label{eq: diffusion strategy}
\left\lbrace
\begin{array}{rl}
\bpsi_{k,i}=&\hspace{-2mm}\bw_{k,i-1}-\mu\widehat{\nabla_{w^*_k}J_k}(\bw_{k,i-1}),\\
\bw_{k,i}=&\hspace{-2mm}\sum\limits_{\ell\in\cN_k}a_{k\ell}\bpsi_{\ell,i}.
\end{array}
\right.
\end{equation}
Observe that this strategy  can be written in the form of~\eqref{eq: distributed solution} with $A_{k\ell}=a_{k\ell} I_L$ and $\cA=A\otimes I_L$. It can be verified that, when $A$ satisfies~\eqref{eq: conditions on A for diffusion} over a strongly connected network, the matrix $\cA$ will satisfy~\eqref{eq: condition 2 on A},~\eqref{eq: first condition on eigenvector},~\eqref{eq: second condition on eigenvector}, and~\eqref{eq: third condition in the proposition}. 
\qed

\noindent\textbf{Remark 2.} (Distributed coupled optimization). Similarly, with a proper selection of $\cU$, multitask inference problems with overlapping parameter vectors~\cite{alghunaim2017distributed,mota2015distributed,platachaves2015distributed} can also be recast in the form~\eqref{eq: constrained optimization problem}. This scenario is illustrated in Fig.~\ref{fig: illustration of the concept} (right). In this example, agent $k$ is influenced by only a subset of the entries of a global $w=[w^1,w^2,w^3]$ and seeks to estimate $w_k=[w^2,w^3]$.  For a given variable $w^\ell$ and any two arbitrary agents containing $w^\ell$ in their costs, it is assumed that the network topology is such that there exists at least one path linking one agent to the other~\cite{alghunaim2017distributed}. By properly selecting the matrix $\cU$, the network vector $\cw=\col\{w_1,\ldots,w_N\}$ can be written as $\cw=\cU w$ and, therefore, distributed coupled optimization can be recast in the form~\eqref{eq: constrained optimization problem}. It can be verified that the coupled diffusion strategy proposed in~\cite{alghunaim2017distributed} for solving this problem can be written in the form of~\eqref{eq: distributed solution} and that the (doubly-stochastic) matrix $\cA$ in~\cite{alghunaim2017distributed} satisfies conditions~\eqref{eq: condition 1 on A} and~\eqref{eq: condition 2 on A}. \qed

\noindent\textbf{Remark 3.} (Distributed optimization under affine constraints). Several existing works consider (distributed or offline) variations of the following problem~\cite{alghunaim2018dual,nassif2017diffusion,hua2017penalty}:
\begin{equation}
\label{eq: affine constrained optimization problem}
\begin{split}
\cw^o=&\,\arg\min_{\ccw}\quad\sum_{k=1}^N J_k(w_k),\\
&\,\st~~\cD^*\cw=d,
\end{split}
\end{equation}
where $\cD$ is an $M\times (M-P)$ full-column rank matrix and $d$ is an $(M-P)\times 1$ column vector.  It turns out that the online distributed strategy proposed in this work can be used to solve~\eqref{eq: affine constrained optimization problem} for general constraints that are not necessarily local. To see this, we first note that the gradient projection method can be applied to solve~\eqref{eq: affine constrained optimization problem}~\cite{bertsekas1999nonlinear}:
\begin{equation}
\label{eq: centralized solution affine constrained optimization problem}
\cw_i=\cP_{\cd}\left(\cw_{i-1}-\mu\,\col\left\{\nabla_{w^*_k}J_k(w_{k,i-1})\right\}_{k=1}^N\right)+d_{\cd},~~ i\geq 0,
\end{equation}
where $\cP_{\cd}\triangleq I_M-\cD(\cD^*\cD)^{-1}\cD^*$ and $d_{\cd}\triangleq\cD(\cD^*\cD)^{-1}d$. Since  $\cP_{\cd}$ is a projection matrix, it can be decomposed as $\cP_{\cd}=\sum_{p=1}^Pu_pu^*_p$ with $\{u_p\}$ the orthonormal eigenvectors \cblue{of $\cP_{\cd}$} associated with the $P$ eigenvalues at one, and thus $\cP_d$ can be replaced by $\cP_{\cu}=\cU\cU^*$ with $\cU=[u_1,\ldots, u_P]$. Therefore, solution~\eqref{eq: centralized solution affine constrained optimization problem} has a form similar to the earlier solution~\eqref{eq: centralized solution} with the rightmost term $d_{\cd}$ in~\eqref{eq: centralized solution affine constrained optimization problem} absent from~\eqref{eq: centralized solution}. Following the same line of reasoning that led to~\eqref{eq: distributed solution}, we can similarly obtain the following distributed adaptive solution for solving~\eqref{eq: affine constrained optimization problem}:
 \begin{equation}
\label{eq: distributed solution affine constrained optimization problem}
\left\lbrace
\begin{array}{rl}
\bpsi_{k,i}=&\hspace{-2mm}\bw_{k,i-1}-\mu\widehat{\nabla_{w^*_k}J_k}(\bw_{k,i-1}),\\
\bw_{k,i}=&\hspace{-2mm}\sum\limits_{\ell\in\cN_k}A_{k\ell}\bpsi_{\ell,i}+d_{\cD,k},
\end{array}
\right.
\end{equation}
where $d_{\cD,k}$ is the $k$-th sub-vector of $(I_M-\cA)d_{\cD}$ corresponding to node $k$ (see Appendix~\ref{app: affine constraints driving term}), and $\cA=[A_{k\ell}]$ is a properly selected matrix satisfying conditions~\eqref{eq: condition 1 on A} and~\eqref{eq: condition 2 on A}. Although algorithm~\eqref{eq: distributed solution affine constrained optimization problem} is different than~\eqref{eq: distributed solution} due to the presence of the constant term $d_{\cD,k}$, the mean-square-error analyzes of both algorithms are the same, as we shall see in Section~\ref{sec: Stochastic performance analysis}. In Section~\ref{subsec: Distributed linearly-constrained-minimum-variance (LCMV) beamformer}, we shall apply~\eqref{eq: distributed solution affine constrained optimization problem} to solve linearly constrained beamforming~\cite{trees2002optimum,frost1972algorithm}.\qed

\noindent\textbf{Remark 4.} (Distributed inference under smoothness). Let $M_k=L$ for all agents. In such problems, each agent $k$ in the network has an individual cost $J_k(w_k)$ to minimize subject to a smoothness condition over the graph. The smoothness requirement softens the transition in the tasks $\{w_k\}$ among neighboring nodes and can be measured in terms of a quadratic form of the graph Laplacian~\cite{nassif2018distributed}:
\begin{equation}
S(\cw)=\cw^\top\cL_c\cw=\frac{1}{2}\sum_{k=1}^N\sum_{\ell\in\cN_k}c_{k\ell}\|w_k-w_\ell\|^2,
\end{equation}
where $\cL_c=L_c\otimes I_L$ with $L_c=\diag\{C\mathds{1}_N\}-C$ denoting the graph Laplacian. The matrix $C=[c_{k\ell}]$ is an $N\times N$ symmetric weighted adjacency matrix with $c_{k\ell}\geq 0$ if $\ell\in\cN_k$ and $c_{k\ell}=0$ otherwise. The smaller $S(\cw)$ is, the smoother the signal $\cw$ on the graph is. Since $L_c$ is symmetric positive semi-definite, it can be decomposed as $L_c=V\Lambda V^\top$ where $\Lambda=\diag\{\lambda_1,\ldots,\lambda_N\}$ with $\lambda_m$ the non-negative eigenvalues ordered as $0=\lambda_1\leq\lambda_2\leq\ldots\leq\lambda_N$ and $V=[v_1,\ldots,v_N]$ is the matrix of orthonormal eigenvectors. When the graph is connected, there is only one zero eigenvalue with corresponding eigenvector $v_1=\frac{1}{\sqrt{N}}\mathds{1}_N$~\cite{chung1997spectral}. Using the eigenvalue decomposition $\cL=(V\Lambda V^\top)\otimes I_L$, $S(\cw)$ can be written as:
\begin{equation}
S(\cw)=\cwb^\top(\Lambda\otimes I_L) \cwb=\sum_{m=1}^N\lambda_m\|\wb_m\|^2,
\end{equation}
where $\cwb=(V^\top\otimes I_L)\cw$ and $\wb_m=(v_m^\top\otimes I_L)\cw$. Given that $\lambda_m\geq 0$, the above expression shows that $\cw$ is considered to be smooth if $\|\wb_m\|^2$ corresponding to large $\lambda_m$ is negligible. Thus, for a smooth $\cw$, $S(\cw)$ will be equal to $\sum_{m=1}^p\lambda_m\|\wb_m\|^2$ with $p\ll N$. By choosing $\cU=U\otimes I_L$ where $U=[v_1,\ldots,v_p]$, the smooth signal $\cw$ will be in the range space of $\cU$ since it can be written as $\cw=\cU s$ with $s=\col\{\wb_1,\ldots,\wb_p\}$. Therefore, distributed inference problems under smoothness can be recast in the form~\eqref{eq: constrained optimization problem}.\qed

Before proceeding, note that, in some cases, one may find a family of matrices $\cA$ satisfying conditions~\eqref{eq: third condition in the proposition},~\eqref{eq: first condition on eigenvector}, and~\eqref{eq: second condition on eigenvector} under the sparsity constraints~\eqref{eq: condition 2 on A}. For example, in consensus optimization described in Remark~1 where $\cU=\frac{1}{\sqrt{N}}(\mathds{1}_N\otimes I_L)$, by ensuring that the underlying graph is strongly connected and by choosing any doubly-stochastic $A$ satisfying the sparsity constraints, the resulting matrix $\cA=A\otimes I_L$ will satisfy the required conditions. The same observation holds for coupled optimization problems described in Remark 2. Several policies for designing \emph{locally} doubly-stochastic matrices have been proposed in the literature~\cite{sayed2013diffusion,sayed2014adaptation,sayed2014adaptive}. For more general $\cU$, designing an $\cA$ satisfying conditions~\eqref{eq: condition 1 on A} and~\eqref{eq: condition 2 on A} can be written as the following feasibility problem:
\begin{equation}
\label{eq: feasibility problem A}
\begin{array}{cl}
\text{find}&\cA\\
\text{such that}&\cA\,\cU=\cU,~~\cU^*\cA=\cU^*,\\ 
&\rho(\cA-\cP_{\cu})< 1,\\ 
&[\cA]_{k\ell}=0, ~\text{if }\ell\notin\cN_k \text{ and }\ell\neq k,
\end{array}
\end{equation} 
which is challenging in general. Not all network topologies satisfying~\eqref{eq: condition 2 on A} guarantee the existence of an $\cA$ satisfying condition~\eqref{eq: condition 1 on A}. The higher the dimension of the signal subspace is, the greater the graph connectivity has to be. In the works~\cite{nassif2019distributed,barbarossa2009distributed}, it is assumed that the sparsity constraints~\eqref{eq: condition 2 on A} and the signal subspace lead to a feasible problem. That is, it is assumed that problem~\eqref{eq: feasibility problem A} admits at least one solution. As a remedy for the violation of such assumption, one may increase the network connectivity by increasing the transmit power of each node, i.e., adding more links~\cite{barbarossa2009distributed}. \cmag{In the accompanying Part~II~\cite{nassif2019adaptation},} we shall relax the feasibility assumption by considering the problem of finding an $\cA$ that minimizes the number of edges to be added to the original topology while satisfying the constraints~\eqref{eq: third condition in the proposition},~\eqref{eq: first condition on eigenvector}, and~\eqref{eq: second condition on eigenvector}. In this case, if the original topology leads to a feasible solution, then no links will be added. Otherwise, we assume that the designer is able to  add some links to make the problem feasible. 

In the following section, we consider that a feasible $\cA$ (topology) is computed by the designer and that its blocks $\{A_{k\ell}\}_{\ell\in\cN_k}$ are provided to agent $k$ in order to run algorithm~\eqref{eq: distributed solution}. We shall study the performance of~\eqref{eq: distributed solution} in the mean-square-error sense. \cblue{We shall consider the general complex case, in addition to the real case, since complex-valued combination matrix $\cA$ and data $\bx_{k,i}$ are important in several applications, as will be the case in the distributed beamforming application considered later in  Section~\ref{subsec: Distributed linearly-constrained-minimum-variance (LCMV) beamformer}.}

\section{Stability analysis}
\label{sec: Stochastic performance analysis}
\cmag{In this Part~I,} we shall establish mean-square-error stability by showing that, for each agent $k$, the error variance relative to $w^o_k$ enters a bounded region whose size is in the order of $\mu$, namely, $\limsup_{i\rightarrow\infty}\expec\|w^o_k-\bw_{k,i}\|^2=O(\mu)$. Then, \cmag{building on this result, we will assess in the accompanying Part~II~\cite{nassif2019adaptation}} the size of this mean-square error by deriving closed-form expression for the network mean-square-deviation (MSD) defined by~\cite{sayed2014adaptation}:
\begin{equation}
\label{eq: MSD definition}
\text{MSD}\triangleq \mu\lim_{\mu\rightarrow 0}\left(\limsup_{i\rightarrow\infty}\frac{1}{\mu}\expec\left(\frac{1}{N}\|\cw^o-\bcw_{i}\|^2\right)\right),
\end{equation}
where $\bcw_i\triangleq\col\{\bw_{k,i}\}_{k=1}^N$. In this way, we will be able to conclude that distributed strategies of the form~\eqref{eq: distributed solution} with small step-size are able to lead to reliable performance even in the presence of gradient noise. We will be able also to conclude that the iterates generated by the distributed implementation achieve the centralized steady-state performance. 

As explained in~\cite[Chap.~8]{sayed2014adaptation}, in the general case where $J_k(w_k)$ are not necessarily quadratic in the (complex) variable $w_k$, we need to track the evolution of both quantities $\bw_{k,i}$ and $(\bw_{k,i}^*)^\top$ in order to examine how the network is performing. Since $J_k(w_k)$ is real valued, the evolution of the complex conjugate iterates  $(\bw_{k,i}^*)^\top$ is given by:
\begin{equation}
\label{eq: distributed solution for conjugate iterates}
\left\lbrace
\begin{array}{rl}
\big(\bpsi_{k,i}^*)^\top=&\big(\bw_{k,i-1}^*\big)^\top-\mu\widehat{\nabla_{w^\top_k}J_k}(\bw_{k,i-1}),\\[0.65ex]
\big(\bw_{k,i}^*\big)^\top=&\sum\limits_{\ell\in\cN_k}(A_{k\ell}^*)^\top\big(\bpsi_{\ell,i}^*\big)^\top.
\end{array}
\right.
\end{equation}
Representations~\eqref{eq: distributed solution} and~\eqref{eq: distributed solution for conjugate iterates} can be grouped together into a single set of equations by introducing extended vectors of dimensions $2M_k\times 1$ as follows:
\begin{equation}
\label{eq: extended algorithm equations}
\left\lbrace
\begin{split}
\left[\begin{array}{c}
\bpsi_{k,i}\\
\big(\bpsi_{k,i}^*)^\top
\end{array}\right]&=\left[\begin{array}{c}
\bw_{k,i-1}\\
\big(\bw_{k,i-1}^*)^\top
\end{array}\right]-\mu\left[\begin{array}{c}
\widehat{\nabla_{w_k^*}J_k}(\bw_{k,i-1})\\[0.5ex]
\widehat{\nabla_{w^\top_k}J_k}(\bw_{k,i-1})
\end{array}\right]\\
\left[\begin{array}{c}
\bw_{k,i}\\
\big(\bw_{k,i}^*)^\top
\end{array}\right]&=\sum_{\ell\in\cN_k}\left[\begin{array}{cc}
A_{k\ell}&0\\
0&(A_{k\ell}^*)^\top
\end{array}\right]\left[\begin{array}{c}
\bpsi_{\ell,i}\\
\big(\bpsi_{\ell,i}^*)^\top
\end{array}\right].
\end{split}
\right.
\end{equation}

Therefore, when the data is complex, extended vectors and matrices need to be introduced in order to analyze the network evolution. The arguments and results presented in the analysis are applicable to both cases of real and complex data through the use of data-type variable $h$ defined in Table~\ref{table: variables table}. When the data is real-valued, the complex conjugate transposition should be replaced by the real transposition. Table~\ref{table: variables table} lists a couple of variables and symbols that will be used in the sequel for both real and complex data cases. The superscript ``$e$'' is used to refer to extended quantities. Although in the real data case no extended quantities should be introduced, we use the superscript ``$e$'' for both data cases for compactness of notation.  

\begin{table}
\caption{Definition of some variables used throughout the analysis. $\cI$ is a permutation matrix defined by~\eqref{eq: permutation matrix block}.}
\begin{center}
\begin{tabular}{c|c|c}
\hline
\hline
Variable&Real data case &Complex data case \\ \hline
Data-type variable $h$&1&2\\
Gradient vector&$\nabla_{w^\top_k}J_k(w_k)$&$\nabla_{w^*_k}J_k(w_k)$\\ 
Error vector $\bwt_{k,i}^e$&$\bwt_{k,i}$ from~\eqref{eq: error vector at node k}&$\left[\begin{array}{c}\bwt_{k,i}\\(\bwt_{k,i}^*)^\top \end{array}\right]$\\  [2.8ex]
Gradient noise $\bs_{k,i}^e(w)$&$\bs_{k,i}(w)$ from~\eqref{eq: gradient noise process}&$\left[\begin{array}{c}\bs_{k,i}(w)\\(\bs_{k,i}^*(w))^\top \end{array}\right]$\\ [2.8ex]
Bias vector $b_k^e$&$b_k$ from~\eqref{eq: bias vector}&$\left[\begin{array}{c}b_k\\(b_k^*)^\top \end{array}\right]$\\ [2.8ex]
$(k,\ell)$-th block of $\cA^e$ & $A_{k\ell}$&$\left[\begin{array}{cc}A_{k\ell}&0\\0&(A^*_{k\ell})^\top \end{array}\right]$\\[2.8ex]
Matrix $\cU^e$&$\cU$&$\cI^\top\left[\begin{array}{cc}\cU&0\\0&(\cU^*)^\top \end{array}\right]$\\ [2.8ex]
Matrix $\cJ^e_{\epsilon}$&$\cJ_{\epsilon}$ from~\eqref{eq: partitioning of cV-1}&$\left[
\begin{array}{cc}
\cJ_{\epsilon}&0\\
0&(\cJ^*_{\epsilon})^\top
\end{array}
\right]$\\ [2.8ex]
Matrix $\cV_{R,\epsilon}^e$&$\cV_{R,\epsilon}$ from~\eqref{eq: partitioning of cV-1}&\hspace{-1.5mm}$\cI^\top\left[
\begin{array}{cc}
\cV_{R,\epsilon}&0\\
0&(\cV_{R,\epsilon}^*)^\top
\end{array}
\right]$\\ [2.8ex]
Matrix $(\cV_{L,\epsilon}^e)^*$&$\cV_{L,\epsilon}^*$ from~\eqref{eq: partitioning of cV-1}&$\left[
\begin{array}{cc}
\cV_{L,\epsilon}^*&0\\
0&\cV_{L,\epsilon}^\top
\end{array}
\right]\cI$
\\
\hline
 \end{tabular}
\end{center}
\label{table: variables table}
\end{table}


\subsection{Modeling conditions}
We analyze~\eqref{eq: distributed solution} under conditions~\eqref{eq: condition 2 on A},~\eqref{eq: third condition in the proposition},~\eqref{eq: first condition on eigenvector}, and~\eqref{eq: second condition on eigenvector} on $\cA$, and the following assumptions on the risks $\{J_k(\cdot)\}$ and on the gradient noise processes $\{\bs_{k,i}(\cdot)\}$ defined as:
\begin{equation}
\label{eq: gradient noise process}
\bs_{k,i}(w)\triangleq\nabla_{w^*_k}J_k(w)-\widehat{\nabla_{w^*_k}J_k}(w).
\end{equation}
Before proceeding, we introduce the Hermitian Hessian \cblue{matrix} functions~\cite[Appendix~B]{sayed2014adaptation}:
\begin{align}
H_k(w_k)&\triangleq\nabla_{w_k}^2J_k(w_k),\qquad\qquad\qquad\qquad\qquad\qquad\qquad\qquad\qquad(hM_k\times hM_k)\\
&=\left\lbrace
\begin{array}{l}
\nabla_{w_k^\top}[\nabla_{w_k}J_k(w_k)],\qquad\qquad\qquad\qquad\qquad\qquad\qquad~\text{when the data is real }(M_k\times M_k)\\
\left[\begin{array}{c|c}
\nabla_{w_k^*}[\nabla_{w_k}J_k(w_k)]&(\nabla_{w_k^\top}[\nabla_{w_k}J_k(w_k)])^*\\[0.5ex]\hline
\nabla_{w_k^\top}[\nabla_{w_k}J_k(w_k)]&(\nabla_{w_k^*}[\nabla_{w_k}J_k(w_k)])^\top
\end{array}
\right]\qquad\quad\text{when the data is complex }(2M_k\times 2M_k)
\end{array}\right.\nonumber\\
\cH(\cw)&\triangleq\diag\left\{H_1(w_1),\ldots,H_N(w_N)\right\},\quad~(hM\times hM).\label{eq: block diagonal Hessian}
\end{align}
Note that, when $J^\text{glob}(\cw)=\sum_{k=1}^NJ_k(w_k)$, we have:
\begin{equation}
\label{eq: relation of the hessian in the real case}
\nabla_{\ccw}^2J^\text{glob}(\cw)=\left\lbrace
\begin{array}{ll}
\cH(\cw),&\text{when the data is real }\\
\cI\cH(\cw)\cI^\top,&\text{when the data is complex }
\end{array}
\right.
\end{equation} 
where $\cI$ is a permutation matrix given by:
\begin{equation}
\label{eq: first permutation matrix}
\cI\triangleq\left[
\begin{array}{cccccccc}
I_{M_1}&0&0&0&\ldots&0&0&0\\
0&0&I_{M_2}&0&\ldots&0&0&0\\
&&&&\ddots&&&\\
0&0&0&0&\ldots&0&I_{M_N}&0\\ \hline
0&I_{M_1}&0&0&\ldots&0&0&0\\
0&0&0&I_{M_2}&\ldots&0&0&0\\
&&&&\ddots&&&\\
0&0&0&0&\ldots&0&0&I_{M_N}
\end{array}
\right].
\end{equation} 
\cmag{This matrix consists of $2N\times 2N$ blocks with $(m,n)$-th block given by:
\begin{equation}
\label{eq: permutation matrix block}
[\cI]_{mn}\triangleq\left\lbrace
\begin{array}{ll}
I_{M_k},&\text{if } m=k, n=2(k-1)+1\\
I_{M_k},&\text{if } m=k+N, n=2k\\
0,&\text{otherwise}\\
\end{array}
\right.
\end{equation} 
for $m,n=1,\ldots,2N$ and $k=1,\ldots,N$. }

\begin{assumption}{\emph{(Conditions on aggregate and individual costs).}}
\label{assump: risks}
The individual costs $J_k(w_k)\in\mathbb{R}$ are assumed to be twice differentiable and convex such that:
\begin{equation}
\label{eq: convexity condition}
\frac{\nu_k}{h}I_{hM_k}\leq H_k(w_k)\leq\frac{\delta_{k}}{h}I_{hM_k},
\end{equation} 
where $\nu_{k}\geq 0$ for $k=1,\ldots,N$. It is further assumed that, for any $\cw$, $\cH(\cw)$ satisfies:
\begin{equation}
\label{assump: aggregate risk}
0<\frac{\nu}{h}I_{hP}\leq(\cU^e)^*\cH(\cw)\cU^e\leq\frac{\delta}{h}I_{hP},
\end{equation}
for some positive parameters $\nu\leq \delta$. The data-type variable $h$ and the matrix $\cU^e$ are defined in \emph{Table~\ref{table: variables table}}.
\end{assumption}
\noindent Condition~\eqref{assump: aggregate risk} ensures that problem~\eqref{eq: constrained optimization problem}, which can be rewritten as:
\begin{equation}
\label{eq: unconstrained optimization problem}
\cw^o=\cU s^o, \quad \text{with }s^o=\arg\min_{s}f(s)\triangleq J^{\text{glob}}(\cU s),
\end{equation}
has a unique minimizer $\cw^o$. This is due to the fact that the Hessian of $f(s)$, which is given by:
\begin{align}
&\nabla^2_{s}f(s)\notag\\
&=\left\lbrace
\begin{array}{l}
\cU^\top\left[\nabla^2_{\ccw}J^{\text{glob}}(\cw)\right]_{\ccw=\cU s}\cU,\qquad\qquad\qquad\qquad\qquad\qquad\quad\text{(real data case)}\\[0.5ex]
\diag\left\{\cU^*,\cU^\top\right\}\left[\nabla_{\ccw}^2J^\text{glob}(\cw)\right]_{\ccw=\cU s}\diag\left\{\cU,(\cU^*)^\top\right\},~\qquad\text{(complex data case)}
\end{array}
\right.\nonumber\\
&\hspace{-1mm}\overset{\eqref{eq: relation of the hessian in the real case}}=(\cU^e)^*\,\cH(\cU s)\,\cU^e\overset{\eqref{assump: aggregate risk}}{\geq}\frac{\nu}{h} I_{hP}>0,
\end{align}
is positive definite under condition~\eqref{assump: aggregate risk}.
\begin{assumption}{\emph{(Conditions on gradient noise).}}
\label{assump: gradient noise}
The gradient noise process defined in~\eqref{eq: gradient noise process} satisfies for any $\bw\in\bcF_{i-1}$ and for all $k,\ell=1,\ldots,N$:
\begin{align}
\expec[\bs_{k,i}(\bw)|\bcF_{i-1}]&=0,\label{eq: gradient noise mean condition}\\
\expec[\bs_{k,i}(\bw)\bs_{\ell,i}^*(\bw)|\bcF_{i-1}]&=0,\qquad k\neq \ell,\label{eq: uncorrelatedness of the gradient noises}\\
\expec[\bs_{k,i}(\bw)\bs_{\ell,i}^\top(\bw)|\bcF_{i-1}]&=0,\qquad k\neq \ell,\label{eq: circularity of the gradient noises}\\
\expec[\|\bs_{k,i}(\bw)\|^2|\bcF_{i-1}]&\leq(\beta_k/h)^2\|\bw\|^2+\sigma^2_{s,k},\label{eq: gradient noise mean square condition}
\end{align}
for some $\beta_k^2\geq 0$, $\sigma^2_{s,k}\geq 0$, and where $\bcF_{i-1}$ denotes the filtration generated by the random processes $\{\bw_{\ell,j}\}$ for all $\ell=1,\ldots,N$ and $j\leq i-1$. 
\end{assumption}
\noindent As explained in~\cite{sayed2013diffusion,sayed2014adaptation,sayed2014adaptive}, these conditions are satisfied by many objective functions of interest in learning and adaptation such as quadratic and logistic risks. 
Condition~\eqref{eq: gradient noise mean condition} essentially states that the gradient vector approximation should be unbiased conditioned on the past data, which is a reasonable condition to require. Condition~\eqref{eq: gradient noise mean square condition} states that the second-order moment of the gradient noise process should get smaller for better estimates, since it is bounded by the squared-norm of the iterate. Conditions~\eqref{eq: uncorrelatedness of the gradient noises} and~\eqref{eq: circularity of the gradient noises} state that the gradient noises across the agents are uncorrelated and second-order circular.

Without loss of generality, we shall introduce the following assumption on the matrix $\cU$\footnote{This assumption is not restrictive since for any full-column rank matrix $\cU' = [u_1',\ldots, u'_P]$ with $P\leq M$, we can generate by using, for example, the Gram-Schmidt process~\cite[pp.~15]{horn2013matrix}, a semi-unitary matrix $\cU = [u_1, \ldots, u_P]$ that spans the same $P$-dimensional subspace of $\mathbb{C}^M$ as $\cU'$, i.e., $\cR(\cU)=\cR(\cU')$.}. 
\begin{assumption}{\emph{(Condition on $\cU$).}}
\label{assump: matrix cU}
The full-column rank matrix $\cU$ is assumed to be semi-unitary, i.e., its column vectors are orthonormal and $\cU^*\cU=I_P$.
\end{assumption}

Before proceeding, we introduce an $N\times N$ block matrix $\cA^e$ whose $(k,\ell)$-th block is defined in Table~\ref{table: variables table}. This matrix will appear in our subsequent study. Observe that in the real data case, $\cA^e=\cA$, and that in the complex data case, $\cA^e$ can be seen as an extended version of the combination matrix $\cA$. The next statement exploits the eigen-structure of $\cA^e$ that will be useful for establishing the mean-square stability.
\begin{lemma}{\emph{(Jordan canonical decomposition).}}
\label{lemm: jordan decomposition}
Under Assumption~\ref{assump: matrix cU}, the $M\times M$ combination matrix $\cA$ satisfying conditions~\eqref{eq: first condition on eigenvector},~\eqref{eq: second condition on eigenvector}, and~\eqref{eq: third condition in the proposition} admits a Jordan canonical decomposition of the form:
\begin{equation}
\label{eq: eigendecomposition of cA}
\cA\triangleq\cV_{\epsilon}\Lambda_{\epsilon}\cV_{\epsilon},
\end{equation}
with:
\begin{equation}
\Lambda_{\epsilon}=\left[
\begin{array}{cc}
I_{P}&0\\
0&\cJ_{\epsilon}
\end{array}
\right],~\cV_{\epsilon}=\left[
\begin{array}{cc}
\cU&\cV_{R,\epsilon}
\end{array}
\right],~
\cV_{\epsilon}^{-1}=\left[
\begin{array}{c}
\cU^*\\
\cV_{L,\epsilon}^*
\end{array}
\right],\label{eq: partitioning of cV-1}
\end{equation}
where $\cJ_{\epsilon}$ is a Jordan matrix with the eigenvalues (which may be complex but have magnitude less than one) on the diagonal and $\epsilon>0$ on the  super-diagonal. It follows that the $hM\times hM$ matrix $\cA^e$ defined in Table~\ref{table: variables table} admits a  Jordan decomposition of the form:
\begin{equation}
\label{eq: jordan decomposition of A e}
\cA^e\triangleq\cV_{\epsilon}^e\Lambda_{\epsilon}^e(\cV_{\epsilon}^e)^{-1},
\end{equation}
with
\begin{equation}
\label{eq: partitioning of cVe-1}
\Lambda_{\epsilon}^e=\left[
\begin{array}{cc}
I_{hP}&0\\
0&\cJ^e_{\epsilon}
\end{array}
\right],~\cV^e_{\epsilon}=[\cU^e~\cV^e_{R,\epsilon}],~(\cV^e_{\epsilon})^{-1}=\left[
\begin{array}{c}
(\cU^e)^*\\
(\cV_{L,\epsilon}^e)^*
\end{array}
\right]
\end{equation}
where $\cU^e,\cJ^e_{\epsilon},\cV^e_{R,\epsilon}$, and $(\cV_{L,\epsilon}^e)^*$ are defined in \emph{Table~\ref{table: variables table}}. Since $(\cV_{\epsilon}^e)^{-1}\cV_{\epsilon}^e=I_{hM}$, the following relations hold:
\begin{equation}
\label{eq: equation from the eigendecomposition}
\begin{split}
&(\cU^e)^*\cU^e=I_{hP},\quad(\cV_{L,\epsilon}^e)^*\cV_{R,\epsilon}^e=I_{h(M-P)},\quad (\cU^e)^*\cV_{R,\epsilon}^e=0,\quad(\cV_{L,\epsilon}^e)^*\cU^e=0.
\end{split}
\end{equation}
\end{lemma}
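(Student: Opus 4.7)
The plan is to assemble the Jordan decomposition of $\cA$ directly from Lemma~\ref{lemm: matrix equation proposition} and the eigenvector identifications in~\eqref{eq: first condition on eigenvector}--\eqref{eq: second condition on eigenvector}, and then to lift the result to $\cA^e$ via a block-permutation identity in the complex data case.

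First, I would invoke Lemma~\ref{lemm: matrix equation proposition} to conclude that $\cA$ has the eigenvalue $1$ with algebraic multiplicity exactly $P$ and all remaining eigenvalues strictly inside the unit disk. I would then argue that the eigenvalue $1$ is \emph{semi-simple}: condition~\eqref{eq: first condition on eigenvector} yields $\cA\cU=\cU$, and since $\cU$ is semi-unitary by Assumption~\ref{assump: matrix cU}, its $P$ columns are linearly independent right eigenvectors attached to eigenvalue $1$; the geometric multiplicity is therefore at least $P$, which together with algebraic multiplicity $P$ forces equality, so every Jordan block attached to $1$ must be of size one. This is consistent with (in fact equivalent to) the requirement $\cA^i\to\cP_{\cu}$ from~\eqref{eq: condition 1 on A}, whose convergence already precludes nontrivial Jordan blocks at unit modulus.

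Next, I would assemble the change of basis. I would take the first $P$ columns of the right-eigenvector matrix to be $\cU$ and collect the generalized eigenvectors of the remaining eigenvalues into a matrix $\cV_{R,\epsilon}$ so that the corresponding block becomes a Jordan matrix $\cJ_\epsilon$; the standard similarity by $\diag\{1,\epsilon,\epsilon^2,\ldots\}$ on each Jordan block replaces the unit super-diagonals by $\epsilon$ without altering the block structure. For the inverse, condition~\eqref{eq: second condition on eigenvector} gives $\cU^*\cA=\cU^*$, so the rows of $\cU^*$ span the left eigenspace of $1$; because $\cU^*\cU=I_P$ by Assumption~\ref{assump: matrix cU}, and because left eigenvectors at one eigenvalue are biorthogonal to right generalized eigenvectors at distinct eigenvalues (hence $\cU^*\cV_{R,\epsilon}=0$), the top $P$ rows of $\cV_\epsilon^{-1}$ can be identified with $\cU^*$. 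The remaining rows $\cV_{L,\epsilon}^*$ are then uniquely determined by $\cV_\epsilon^{-1}\cV_\epsilon=I_M$, which also yields the biorthogonality relations in~\eqref{eq: equation from the eigendecomposition} (in the real-data case).

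Finally, for the extended matrix $\cA^e$ in the complex case, I would observe that the block structure of $\cA^e$ together with the permutation $\cI$ defined in~\eqref{eq: first permutation matrix}--\eqref{eq: permutation matrix block} gives the identity $\cA^e=\cI\,\diag\{\cA,\overline{\cA}\}\,\cI^\top$, where $\overline{\cA}=(\cA^*)^\top$. Since the Jordan decomposition of $\overline{\cA}$ is obtained by entrywise conjugation of that of $\cA$, plugging both into the block-diagonal form and conjugating by $\cI$ produces exactly the factors $\cV_\epsilon^e$, $\Lambda_\epsilon^e$ and $(\cV_\epsilon^e)^{-1}$ listed in Table~\ref{table: variables table}; the four biorthogonality identities~\eqref{eq: equation from the eigendecomposition} drop out from the block-diagonal case and survive the permutation because $\cI^\top\cI=I_{2M}$. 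The main obstacle, in my view, will be making the semi-simplicity argument rigorous and checking carefully that $\cU^*\cV_{R,\epsilon}=0$; both hinge on the algebraic multiplicity of eigenvalue $1$ being exactly $P$, which is supplied by Lemma~\ref{lemm: matrix equation proposition}. Everything else reduces to standard Jordan-form manipulations and careful bookkeeping of the permutation $\cI$.
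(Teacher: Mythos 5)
Your proposal is correct and follows essentially the same route as the paper's Appendix~\ref{sec: jordan canonical decomposition}: a Jordan form of $\cA$ with $\cU$ and $\cU^*$ as the leading right/left eigenvector blocks, the $\cE=\diag\{I_P,\epsilon,\epsilon^2,\ldots\}$ rescaling that turns the unit super-diagonals into $\epsilon$, and a permutation similarity between $\cA^e$ and $\diag\{\cA,(\cA^*)^\top\}$ in the complex case. You in fact justify two points the paper asserts without comment (semi-simplicity of the eigenvalue $1$, and the biorthogonality $\cU^*\cV_{R,\epsilon}=0$); the only bookkeeping slips are that the paper's convention is $\cA^e=\cI^\top\diag\{\cA,(\cA^*)^\top\}\cI$ rather than the reverse, and that a second permutation $\cI'$ is needed to interleave the two $I_P$ blocks into the leading $I_{hP}$ of $\Lambda_\epsilon^e$.
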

\begin{proof}
\cmag{See Appendix~\ref{sec: jordan canonical decomposition}.}
\end{proof}

\subsection{Network error vector recursion}
Let $\bwt_{k,i}$ denote the error vector at node $k$:
\begin{equation}
\label{eq: error vector at node k}
\bwt_{k,i}\triangleq w^o_k-\bw_{k,i}.
\end{equation}
Consider first the complex data case. Using~\eqref{eq: gradient noise process} and the mean-value theorem~\cite[pp.~24]{polyak1987introduction},~\cite[Appendix D]{sayed2014adaptation}, we can express the stochastic gradient vectors appearing in~\eqref{eq: extended algorithm equations} as follows: 
\begin{equation}
\label{eq: mean-value theorem 1}
\left[\begin{array}{c}
\widehat{\nabla_{w_k^*}J_k}(\bw_{k,i-1})\\[0.5ex]
\widehat{\nabla_{w_k^\top}J_k}(\bw_{k,i-1})
\end{array}\right]=-\bH_{k,i-1}\bwt_{k,i-1}^e+b_k^e-\bs_{k,i}^e(\bw_{k,i-1})
\end{equation}
where:
\begin{equation}
\bH_{k,i-1}\triangleq\int_{0}^1{\nabla_{w_k}^2J_k}(w^o_k-t\bwt_{k,i-1})dt,\label{eq: Hki-1}
\end{equation}
and $\bwt_{k,i}^e$, $\bs_{k,i}^e(\bw_{k,i-1})$, and $b_k^e$ are defined in Table~\ref{table: variables table} with:
\begin{equation}
\label{eq: bias vector}
b_k\triangleq\nabla_{w^*_k}J_k(w^o_k).
\end{equation}
Subtracting $(w^o_k)^e=\col\{w^o_k,((w^o_k)^*)^\top\}$ from both sides of~\eqref{eq: extended algorithm equations} and by introducing the following extended vectors and matrices, which collect quantities from across the network:
\begin{align}
\bcwt_i^e&\triangleq\col\left\{\bwt^e_{1,i},\ldots,\bwt^e_{N,i}\right\},\label{eq: cwt}\\
\bcH_{i-1}&\triangleq\diag\left\{\bH_{1,i-1},\ldots,\bH_{N,i-1}\right\},\\
\bcB_{i-1}&\triangleq\cA^e(I_{hM}-\mu\bcH_{i-1}),\label{eq: matrix Bi-1 1}\\
\bs_i^e&\triangleq\col\left\{\bs^e_{1,i}(\bw_{1,i-1}),\ldots,\bs^e_{N,i}(\bw_{N,i-1})\right\},\\
b^e&\triangleq\col\left\{b_1^e,\ldots,b_N^e\right\},\label{eq: expression of b 1}
\end{align}
we can show that the network weight error vector $\bcwt^e_i$ in~\eqref{eq: cwt} \cblue{evolves} according to the following dynamics:
\begin{equation}
\label{eq: error recursion 1}
\fbox{$\bcwt_i^e=\bcB_{i-1}\bcwt^e_{i-1}-\mu\cA^e\bs_i^e+\mu\cA^e b^e$}
\end{equation}
where $\cA^e$ is defined in Table~\ref{table: variables table} and where we used~\eqref{eq: mean-value theorem 1} and the fact that 
\begin{equation}
\label{eq: relation between Awstar and wstar 2}
\sum_{\ell\in\cN_k}A_{k\ell}w^o_\ell=w^o_k,
\end{equation}
since $\cw^o$ is the solution of problem~\eqref{eq: constrained optimization problem}, and thus:
\begin{equation}
\label{eq: relation between Awstar and wstar}
\cA\cw^o=\cA\cP_\cu\cw^o\overset{\eqref{eq: first condition in the proposition}}{=}\cP_\cu\cw^o=\cw^o.
\end{equation}
For real data, the model can be simplified since we do not need to track the evolution of the complex conjugate $(\bw_{k,i}^*)^\top$. Although we use the notation ``$e$'' for the quantities in the above recursion, it is to be understood that the extended quantities $\{\bwt^e_{k,i},b_k^e,\bs_{k,i}^e,\cA^e\}$ should be replaced by the quantities $\{\bwt_{k,i},b_k,\bs_{k,i},\cA\}$ as in Table~\ref{table: variables table}.

The stability analysis of recursion~\eqref{eq: error recursion 1} is facilitated by transforming it to a convenient basis using the Jordan decomposition of $\cA^e$ in Lemma~\ref{lemm: jordan decomposition}. Multiplying both sides of~\eqref{eq: error recursion 1} from the left by $(\cV_{\epsilon}^e)^{-1}$ and introducing the transformed iterates and variables:
\begin{align}
(\cV_{\epsilon}^e)^{-1}\bcwt_i^e&=\left[
\begin{array}{c}
(\cU^e)^*\bcwt^e_i\\
(\cV_{L,\epsilon}^e)^*\bcwt_i^e
\end{array}
\right]\triangleq\left[\begin{array}{c}
\bcwb_{i}^e\\
\bcwc_{i}^e
\end{array}
\right],\label{eq: transformed version of wt}\\
\mu(\cV_{\epsilon}^e)^{-1}\cA^e\bs_i^e&=\left[
\begin{array}{c}
\mu\,(\cU^e)^*\cA^e\bs_i^e\\
\mu\,(\cV_{L,\epsilon}^e)^*\cA^e\bs_i^e
\end{array}
\right]\triangleq\left[\begin{array}{c}
\bsb_{i}^e\\
\bsc_{i}^e
\end{array}
\right],\label{eq: transformed version of the noise}\\
\mu(\cV_{\epsilon}^e)^{-1}\cA^e b^e&=\left[
\begin{array}{c}
\mu\,(\cU^e)^*\cA^e b^e\\
\mu\,(\cV_{L,\epsilon}^e)^*\cA^e b^e
\end{array}
\right]\triangleq\left[\begin{array}{c}
0\\
\widecheck{b}^e
\end{array}
\right],\label{eq: transformed version of the bias}
\end{align}
we obtain from Lemma~\ref{lemm: jordan decomposition}:
\begin{align}
\bcwb_{i}^e&=(I_{hP}-\bcD_{11,i-1})\bcwb_{i-1}^e-\bcD_{12,i-1}\bcwc_{i-1}^e-\bsb_{i}^e,\label{eq: error recursion for wbi}\\
\bcwc_{i}^e&=(\cJ_{\epsilon}^e-\bcD_{22,i-1})\bcwc_{i-1}^e-\bcD_{21,i-1}\bcwb_{i-1}^e-\bsc_{i}^e+\widecheck{b}^e,\label{eq: error recursion for wci}
\end{align}
where 
\begin{align}
\bcD_{11,i-1}&=\mu\,(\cU^e)^*\bcH_{i-1}\cU^e,\label{eq: D11}\\
\bcD_{12,i-1}&=\mu\,(\cU^e)^*\bcH_{i-1}\cV_{R,\epsilon}^e,\label{eq: D12}\\
\bcD_{21,i-1}&=\mu\,\cJ_{\epsilon}^e(\cV_{L,\epsilon}^e)^*\bcH_{i-1}\cU^e,\label{eq: D21}\\
\bcD_{22,i-1}&=\mu\,\cJ_{\epsilon}^e(\cV_{L,\epsilon}^e)^*\bcH_{i-1}\cV_{R,\epsilon}^e.\label{eq: D22}
\end{align}
Recursions~\eqref{eq: error recursion for wbi} and~\eqref{eq: error recursion for wci} can be written more compactly as:
\begin{align}
\left[\begin{array}{c}
\bcwb_{i}^e\\
\bcwc_{i}^e
\end{array}\right]=&
\left[\begin{array}{cc}
I_{hP}-\bcD_{11,i-1}&-\bcD_{12,i-1}\notag\\
-\bcD_{21,i-1}&\cJ_{\epsilon}^e-\bcD_{22,i-1}
\end{array}\right]\left[\begin{array}{c}
\bcwb_{i-1}^e\\
\bcwc_{i-1}^e
\end{array}\right]\\
&-\left[\begin{array}{c}
\bsb_{i}^e\\
\bsc_{i}^e
\end{array}\right]+\left[\begin{array}{c}
0\\
\widecheck{b}^e
\end{array}\right].\label{eq: error recursion transformed 1}
\end{align}
The zero entry in~\eqref{eq: transformed version of the bias} is due to the fact that 
\begin{equation}
\label{eq: equation on the bias}
(\cU^{e})^*\cA^e \,b^e\overset{\eqref{eq: jordan decomposition of A e},\eqref{eq: equation from the eigendecomposition}}{=}(\cU^{e})^* b^e=0,
\end{equation}
since the constrained optimization problem~\eqref{eq: constrained optimization problem} can be written alternatively as:
\begin{equation}
\label{eq: optimization problem 1}
\begin{split}
&\minimize_\ccw\quad J^\text{glob}(\cw)= \sum_{k=1}^NJ_k(w_k)\\
&\st\quad(I_{M}-\cP_{\cu})\cw=0.
\end{split}
\end{equation}
The Lagrangian associated with problem~\eqref{eq: optimization problem 1} is given by:
\begin{equation}
\cL(\cw;\gamma)=J^\text{glob}(\cw)+h\text{Re}\{\gamma^*(I_{M}-\cP_\cu)\cw\}
\end{equation}
where $\gamma$ is the $M\times 1$ vector of Lagrange multipliers. From the optimality conditions, we obtain the following condition on $\cw^o$:
\begin{equation}
\label{eq: bias vector first component}
\left\lbrace
\begin{array}{l}
\underbrace{\nabla_{\ccw^\top}J^{\text{glob}}(\cw^o)}_{b^e}+(I_{M}-\cP_\cu)\gamma=0,\qquad\qquad\qquad\qquad\qquad\qquad\qquad\qquad\text{(when the data is real)}\\
\underbrace{\left[\begin{array}{c}
\nabla_{\ccw^*}J^{\text{glob}}(\cw^o)\\
\nabla_{\ccw^\top}J^{\text{glob}}(\cw^o)
\end{array}
\right]}_{\cI b^e}+\left[\begin{array}{cc}
I_M-\cP_{\cu}&0\\
0&(I_M-\cP_{\cu})^\top
\end{array}
\right]\left[\begin{array}{c}
\gamma\\
(\gamma^*)^\top
\end{array}
\right]=0,\qquad\text{(when the data is complex)}
\end{array}
\right.
\end{equation}
where we used the fact that $\sum_{k=1}^NJ_k(w_k)$ is real valued and where $b^e$ and $\cI$ are given by~\eqref{eq: expression of b 1} and~\eqref{eq: permutation matrix block}, respectively. In the real data case, by multiplying both sides of the previous relation by $(\cU^e)^*=\cU^\top$, we obtain $(\cU^e)^* b^e=0$. For complex data, by multiplying both sides of the previous equation by $(\cU^e)^*\cI^\top$ with $\cU^e$ defined in Table~\ref{table: variables table}, we obtain $(\cU^e)^* b^e=0$. Now, considering both real and complex data cases, we arrive at~\eqref{eq: equation on the bias}.

\noindent\textbf{Remark 5.} Regarding algorithm~\eqref{eq: distributed solution affine constrained optimization problem}, it can be verified that the weight error vector $\bcwt_i^e$ will end up evolving according to recursion~\eqref{eq: error recursion 1}. The constant driving terms $\{d_{\cd,k}\}$ will disappear when subtracting $w^o_k$ from both sides of~\eqref{eq: distributed solution affine constrained optimization problem} since $w^o_k$ satisfies the following relation:
\begin{equation}
w^o_k=\sum_{\ell\in\cN_k}A_{k\ell}w^o_{\ell}+d_{\cd,k},
\end{equation} 
where we used the fact that the optimal solution $\cw^o$ in~\eqref{eq: affine constrained optimization problem} verifies:
\begin{align}
\cw^o=\cP_\cu\cw^o+d_{\cd}&\overset{\eqref{eq: first condition in the proposition}}{=}\cA(\cP_\cu\cw^o+d_{\cd})+(I_M-\cA)d_{\cd}\notag\\
&~=\cA\cw^o+(I_M-\cA)d_{\cd}.
\end{align}
By rewriting the constraint in~\eqref{eq: affine constrained optimization problem} as $(I_M-\cP_\cu)\cw=d_{\cd}$ and repeating similar arguments as~\eqref{eq: optimization problem 1}--\eqref{eq: bias vector first component}, we can show that $(\cU^e)^* b^e=0$. Therefore, the transformed iterates $\bcwb_{i}^e$ and $\bcwc_{i}^e$ in~\eqref{eq: transformed version of wt} will continue to evolve according to recursions~\eqref{eq: error recursion for wbi} and~\eqref{eq: error recursion for wci}. \qed

In the following, we shall establish the mean-square-error stability of algorithm~\eqref{eq: distributed solution}. \cmag{In the accompanying Part~II, we will \cblue{derive a}} closed-form expression for the network MSD defined by~\eqref{eq: MSD definition}. The derivation is demanding. \cmag{However, the arguments are along the lines developed in~\cite[Chaps. 9--11]{sayed2014adaptation} for standard diffusion~\eqref{eq: diffusion strategy} with proper adjustments to handle possibly complex valued {\em block} matrices $\{A_{k\ell}\}$ satisfying conditions~\eqref{eq: condition 1 on A} and~\eqref{eq: condition 2 on A} and {the} subspace constraints. }

\subsection{Mean-square-error stability}
\label{subsec: Mean-square-error stability}
\begin{theorem}{\emph{(Network mean-square-error stability).}}
\label{theo: Network mean-square-error stability}
Consider a network of $N$ agents running the distributed  strategy~\eqref{eq: distributed solution} with a matrix $\cA$ satisfying conditions~\eqref{eq: first condition on eigenvector},~\eqref{eq: second condition on eigenvector}, and~\eqref{eq: third condition in the proposition} and $\cU$ satisfying Assumption~\ref{assump: matrix cU}. Assume the individual costs, $J_k(w_k)$, satisfy the conditions in Assumption~\ref{assump: risks}. Assume further that the first and second-order moments of the gradient noise process satisfy the conditions in Assumption~\ref{assump: gradient noise}. Then, the network is mean-square-error stable for sufficiently small step-sizes, namely, it holds that:
\begin{equation}
\label{eq: mean-square error convergence result}
\limsup_{i\rightarrow\infty}\expec\|w^o_k-\bw_{k,i}\|^2=O(\mu),\quad k=1,\ldots,N,
\end{equation} 
for small enough $\mu$.
\end{theorem}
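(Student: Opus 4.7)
The plan is to analyze the coupled transformed recursions~\eqref{eq: error recursion for wbi}--\eqref{eq: error recursion for wci} via a two-time-scale contraction argument, and then translate the resulting bound back to $\bcwt_i^e$ through the Jordan change of basis~\eqref{eq: partitioning of cVe-1}. The key observation is that the $\bcwb_i^e$-block enjoys a \emph{slow} contraction at rate $1-O(\mu)$ coming from the strong-convexity-on-the-constraint lower bound $(\cU^e)^*\cH(\cw)\cU^e\geq\tfrac{\nu}{h}I_{hP}$ in Assumption~\ref{assump: risks}, while the $\bcwc_i^e$-block enjoys a \emph{fast} contraction at a rate bounded away from one, supplied by $\rho(\cA-\cP_{\cu})<1$ in~\eqref{eq: third condition in the proposition} through Lemma~\ref{lemm: jordan decomposition}. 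The off-diagonal coupling blocks $\bcD_{12,i-1}$ and $\bcD_{21,i-1}$ are themselves of order $\mu$, which will turn out to be just enough to preserve overall stability.

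First I would condition on $\bcF_{i-1}$ and compute squared norms of both sides of~\eqref{eq: error recursion for wbi}--\eqref{eq: error recursion for wci}. The noise cross-terms vanish by~\eqref{eq: gradient noise mean condition}, and $\expec[\|\bsb_i^e\|^2+\|\bsc_i^e\|^2\,|\,\bcF_{i-1}]$ is bounded via~\eqref{eq: gradient noise mean square condition} together with the boundedness of $\cA^e$, $(\cU^e)^*$, $(\cV_{L,\epsilon}^e)^*$ by $O(\mu^2)\bigl(\sigma^2+\expec\|\bcwt_{i-1}^e\|^2\bigr)$. Jensen's inequality with a free splitting parameter is then applied to the remaining deterministic parts. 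The required matrix-norm bounds follow from Assumption~\ref{assump: risks} and the integral definition~\eqref{eq: Hki-1}: $\|I_{hP}-\bcD_{11,i-1}\|\leq 1-\tfrac{\mu\nu}{h}$ (since $\bcD_{11,i-1}$ is Hermitian with spectrum in $[\mu\nu/h,\mu\delta/h]$), and $\|\bcD_{22,i-1}\|,\|\bcD_{12,i-1}\|,\|\bcD_{21,i-1}\|=O(\mu)$. By choosing $\epsilon>0$ small enough in Lemma~\ref{lemm: jordan decomposition} we secure $\|\cJ^e_\epsilon\|\leq \rho+\epsilon<1$, so $\|\cJ^e_\epsilon-\bcD_{22,i-1}\|\leq \rho'<1$ for all sufficiently small $\mu$.

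Collecting these bounds and choosing the splitting parameters $\alpha_1=\tfrac{\mu\nu}{2h}$ and $\alpha_2=\tfrac{1-\rho'}{2}$ yields a $2\times 2$ linear inequality
\begin{equation*}
\left[\begin{array}{c}\expec\|\bcwb_i^e\|^2\\[0.5ex]\expec\|\bcwc_i^e\|^2\end{array}\right]\;\preceq\;\Gamma\left[\begin{array}{c}\expec\|\bcwb_{i-1}^e\|^2\\[0.5ex]\expec\|\bcwc_{i-1}^e\|^2\end{array}\right]+O(\mu^2)\,\mathds{1}_2,\qquad \Gamma=\left[\begin{array}{cc}1-\tfrac{\mu\nu}{2h}+O(\mu^2)&O(\mu)\\[0.5ex] O(\mu)&\rho''+O(\mu)\end{array}\right],
\end{equation*}
for some $\rho''<1$. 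Because the diagonal entries of $\Gamma$ are $1-O(\mu)$ and bounded below one by a constant, while the \emph{product} of the off-diagonals is only $O(\mu^2)$, a direct computation of the eigenvalues of $\Gamma$ (perturbation of a diagonal matrix with an $O(1)$ gap) gives $\rho(\Gamma)\leq 1-c\,\mu$ for some $c>0$ and all sufficiently small $\mu$. Iterating then delivers $\limsup_{i\to\infty}\expec\|\bcwb_i^e\|^2=O(\mu)$ and $\limsup_{i\to\infty}\expec\|\bcwc_i^e\|^2=O(\mu^2)$.

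Finally, since $\bcwt_i^e=\cU^e\bcwb_i^e+\cV^e_{R,\epsilon}\bcwc_i^e$ and both transformation matrices have bounded norm ($\cU^e$ is in fact semi-unitary by Assumption~\ref{assump: matrix cU}), we conclude $\limsup_{i\to\infty}\expec\|\bcwt_i^e\|^2=O(\mu)$, which dominates $\expec\|\bwt_{k,i}\|^2$ for every $k$ and hence yields~\eqref{eq: mean-square error convergence result}. The main obstacle is handling the two-time-scale coupling rigorously: the off-diagonals of $\Gamma$ only shrink like $\mu$ (not $\mu^2$), so the Jensen splitting must be tuned so that their contribution to the spectral analysis of $\Gamma$ shows up through their \emph{product} $O(\mu^2)$ rather than individually, without destroying the $O(\mu)$ spectral gap of the $\bcwb$-block. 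A secondary subtlety is the state-dependent growth~\eqref{eq: gradient noise mean square condition} of the gradient noise, which forces one to fold a $\mu^2\beta_k^2\expec\|\bcwt_{i-1}^e\|^2$ term into the diagonal of $\Gamma$; this is harmless because it is of lower order than the $O(\mu)$ slack already present in the first diagonal entry.
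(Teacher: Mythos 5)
Your proposal is correct and follows essentially the same route as the paper's proof in Appendix~D: transform to the Jordan basis, derive the coupled variance inequalities via Jensen splitting with the bounds $\|I_{hP}-\bcD_{11,i-1}\|\leq 1-O(\mu)$ and $\|\bcD_{12,i-1}\|,\|\bcD_{21,i-1}\|,\|\bcD_{22,i-1}\|=O(\mu)$, assemble the $2\times 2$ matrix $\Gamma$, and invert $I-\Gamma$ to get the $O(\mu)$ and $O(\mu^2)$ limits. The only cosmetic difference is how stability of $\Gamma$ is certified: the paper tunes the splitting parameter in the $\bcwc$-recursion so that the lower-left entry is $O(\mu^2)$ and then uses the $1$-norm bound on $\rho(\Gamma)$, whereas you argue via the eigenvalue perturbation of a diagonal matrix with an $O(1)$ gap, which tolerates $O(\mu)$ off-diagonals; both yield $\rho(\Gamma)<1$ for small $\mu$ and the same final conclusion.
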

\begin{proof}
See Appendix~\ref{app: mean-square error stability}.
\end{proof}

\section{Distributed linearly constrained minimum variance (LCMV) beamformer}
\label{subsec: Distributed linearly-constrained-minimum-variance (LCMV) beamformer}

Consider a uniform linear array (ULA) of $N=14$ antennas, as shown in Fig.~\ref{fig: beamformer}. A desired narrow-band signal $\bs_0(i)\in\mathbb{C}$ from far field impinges on the array from known direction of arrival (DOA) $\theta_0=30^\circ$ along with two uncorrelated interfering signals $\{\bs_1(i),\bs_2(i)\}\in\mathbb{C}$ from DOAs $\{\theta_1=-60^{\circ},\theta_2=60^{\circ}\}$, respectively. We assume that the DOA of $\bs_3(i)$ is roughly known. The received signal at the array is therefore modeled as:
\begin{equation}
\bx_i=a(\theta_0)\bs_0(i)+\underbrace{\sum_{n=1}^2a(\theta_n)\bs_n(i)+\bv_i}_{\bz_i}
\end{equation}
where $\bx_i=\col\{\bx_1(i),\ldots,\bx_N(i)\}$ is an $N\times 1$ vector that collects the received signals at the antenna elements, $\{a(\theta_n)\}_{n=0}^2$ are $N\times 1$ array manifold vectors (steering vectors) for the desired and interference signals, and $\bv_i=\col\{\bv_1(i),\ldots,\bv_N(i)\}$ is the additive noise vector at time $i$. With the first element as the reference point, the $N\times 1$ array manifold vector $a(\theta_n)$ is given by
$a(\theta_n)=\col\left\{1,e^{-j\tau_n},e^{-j2\tau_n},\ldots,e^{-j(N-1)\tau_{n}}\right\}$~\cite{trees2002optimum},
with $\tau_n=\frac{2\pi d}{\lambda}\sin(\theta_n)$ where $d$ denotes the spacing between two adjacent antenna elements, and $\lambda$ denotes the wavelength of the carrier signal. The antennas are assumed spaced half a wavelength apart, i.e., $d=\lambda/2$. 

Beamforming problems generally deal with the design of a weight vector $h=\col\{h_1,\ldots,h_N\}\in\mathbb{C}^{N\times 1}$ in order to recover the desired signal $\bs_0(i)$ from the received data $\bx_i$~\cite{trees2002optimum,frost1972algorithm}. The narrowband beamformer output can be expressed as $\by(i)=h^*\bx_i$. Among many possible criteria, we  use the linearly-constrained-minimum-variance (LCMV) design, namely,
\begin{equation}
\label{eq: lcmv beamforming problem}
\begin{split}
h^o=&\,\arg\min_h~~\expec\,|h^*\bx_i|^2=h^*R_x h\\
&\,\st ~D^*h=b,
\end{split}
\end{equation}
where $D$ is an $N\times P$ matrix and $b$ is a $P\times 1$ vector, in order to suppress the influence of the perturbation $\bz_i$ on the output $\by(i)$ while preserving the signal component. Since the DOAs of $\bs_0(i)$ is known and the DOA of $\bs_2(i)$ is roughly known, matrix $D$ can be chosen as $D=[a(30^\circ)~a(58.5^\circ)~a(61.5^\circ)]$, and the vector $b$ as $b=\col\{1,0.01,0.01\}$. In this way, we set unit response to the direction of the desired signal so that $\bs_0(i)$ passes through the array without distortion.

In a distributed setting, the objective of agent (antenna element) $k$ is to estimate $h^o_k$, the $k$-th component of $h^o$ in~\eqref{eq: lcmv beamforming problem}. Neighboring agents are allowed to exchange their observations $\bx_\ell(i)$.  To each agent $k$, we associate a neighborhood set $\cN_k$, an $M_k\times 1$ parameter vector $w_k$, and an $M_k\times 1$ regression vector $\bu_{k,i}$, defined in Table~\ref{table: beamformer table} depending on the node location on the array.  Observe that the parameter $\nu$ controls the network topology. For example, $\nu=N-1$ corresponds to a fully connected network setting. We associate with each agent $k$ a cost $J_k(w_k)\triangleq w^*_k\expec[\bu_{k,i}\bu_{k,i}^*]w_k.$
Instead of solving~\eqref{eq: lcmv beamforming problem}, we propose to solve:
\begin{equation}
\label{eq: distributed lcmv beamforming problem}
\begin{split}
\cw^o=&\,\arg\min_\ccw~~J^{\text{glob}}(\cw)=\sum_{k=1}^NJ_k(w_k)\\
&\,\st ~\cD^*\cw=d,
\end{split}
\end{equation}
where the equality constraint $\cD^*\cw=d$ merges the equality constraint in~\eqref{eq: lcmv beamforming problem} and the equality constraints that need to be imposed on the parameter vectors at neighboring nodes in order to achieve equality between common entries (see Table~\ref{table: beamformer table}). Let $E$ denote the binary connection matrix with $[E_{k\ell}]=1$ if $\ell\in\cN_k$, and $0$ otherwise. Under the consensus constraints, it can be shown that:
\begin{equation}
J^{\text{glob}}(\cw)=\sum_{k=1}^NJ_k(w_k)=h^*(F\circ R_x) h
\end{equation}
where $F$ is an $N\times N$ matrix with $[F]_{k\ell}=\frac{[E^2]_{k\ell}}{\sqrt{|\cN_k||\cN_{\ell}|}}$ and $\circ$ is the element-wise product. Therefore, collecting observations from neighboring nodes allows partial covariance matrix computation, which will be used in optimization. For the partial covariance $F\circ R_x$ to converge to the true covariance $R_x$ in~\eqref{eq: lcmv beamforming problem}, we need to set $\nu=N-1$ in order to have $F=\mathds{1}_N\mathds{1}_N^\top$. Note that, two main classes of distributed beamforming appear in the literature~\cite{koutrouvelis2018lowcost}. In the first class, which is considered here, the covariance matrix is approximated to form distributed implementations~\cite{connor2014diffusion,zeng2014distributed,connor2016distributed,koutrouvelis2018lowcost} leading to sub-optimal beamformers. In the second class, the proposed beamformers obtain statistical optimality but do so at the expense of restricting the topology of the underlying network~\cite{bertrand2013distributed}. Different from~\cite{connor2014diffusion}, the current distributed solution preserves convexity and is scalable since nodes exchange and compute $M_\ell\times 1$ sub-vectors  $\{w_\ell\}$ with $M_\ell=|\cN_\ell|<N$ instead of $N\times 1$ vectors. 
\begin{table*}
\caption{Distributed beamforming settings for uniform linear arrays of $N$ antennas ($1\leq \nu\leq N-1$).}
\begin{center}
\begin{tabular}{|c|c|c|}
\hline
\cellcolor{gray!60}Neighboring set $\cN_k$& \cellcolor{gray!60}Parameter vector $w_k$& \cellcolor{gray!60}Regressor $\bu_{k,i}$ \\ \hline\hline
$\{\max\{1,k-\nu\},\ldots,\min\{k+\nu,N\}\}$&$\col\{h_m\}_{m=\max\{1,k-\nu\}}^{\min\{k+\nu,N\}}$&$\col\{|\cN_m|^{-\frac{1}{2}}\bx_m(i)\}_{m=\max\{1,k-\nu\}}^{\min\{k+\nu,N\}}$\\ \hline
 \end{tabular}
\end{center}
\label{table: beamformer table}
\vspace{-3mm}
\end{table*}
\begin{figure*}
\begin{center}
\includegraphics[scale=0.25]{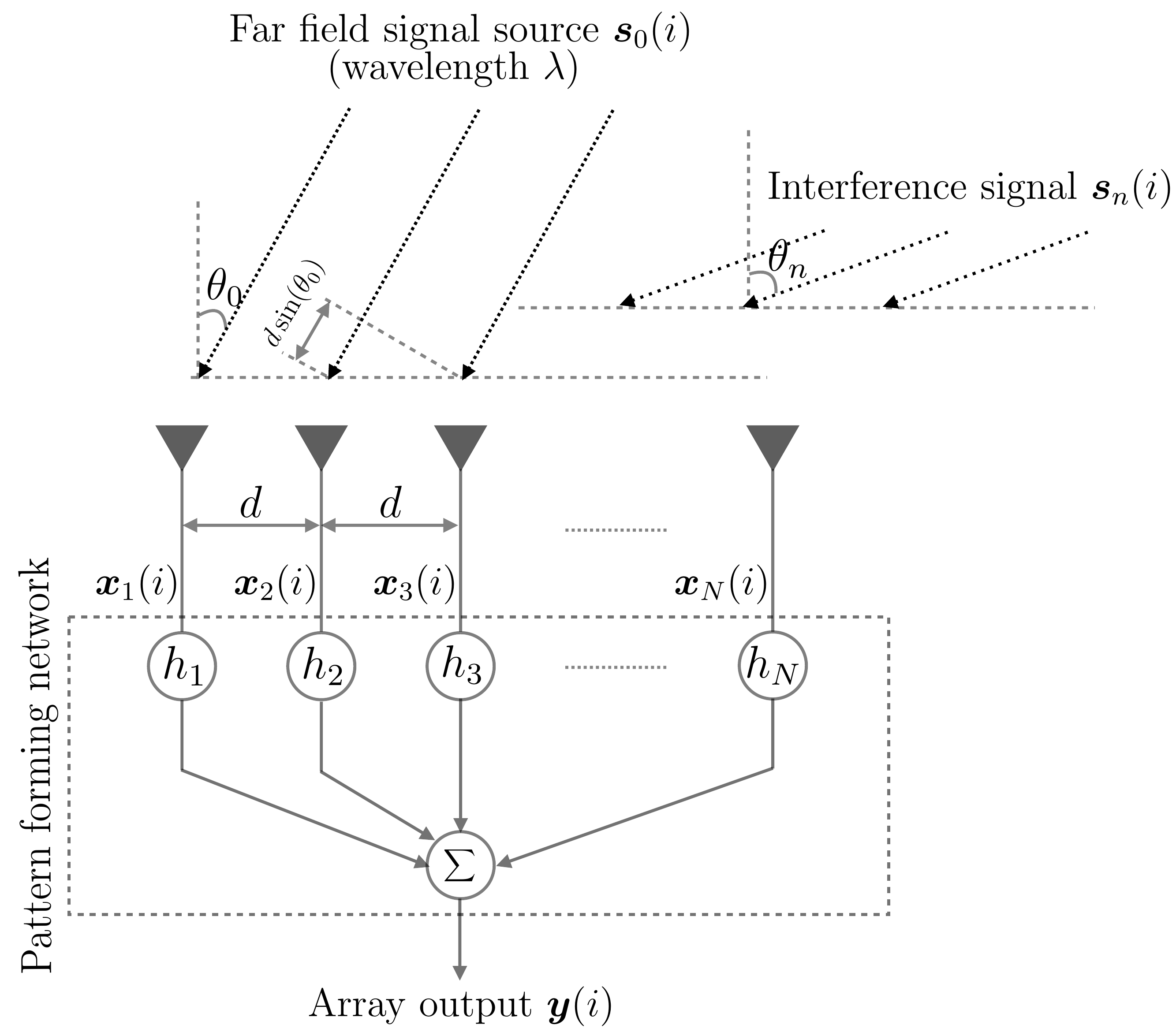}\qquad
\includegraphics[scale=0.4]{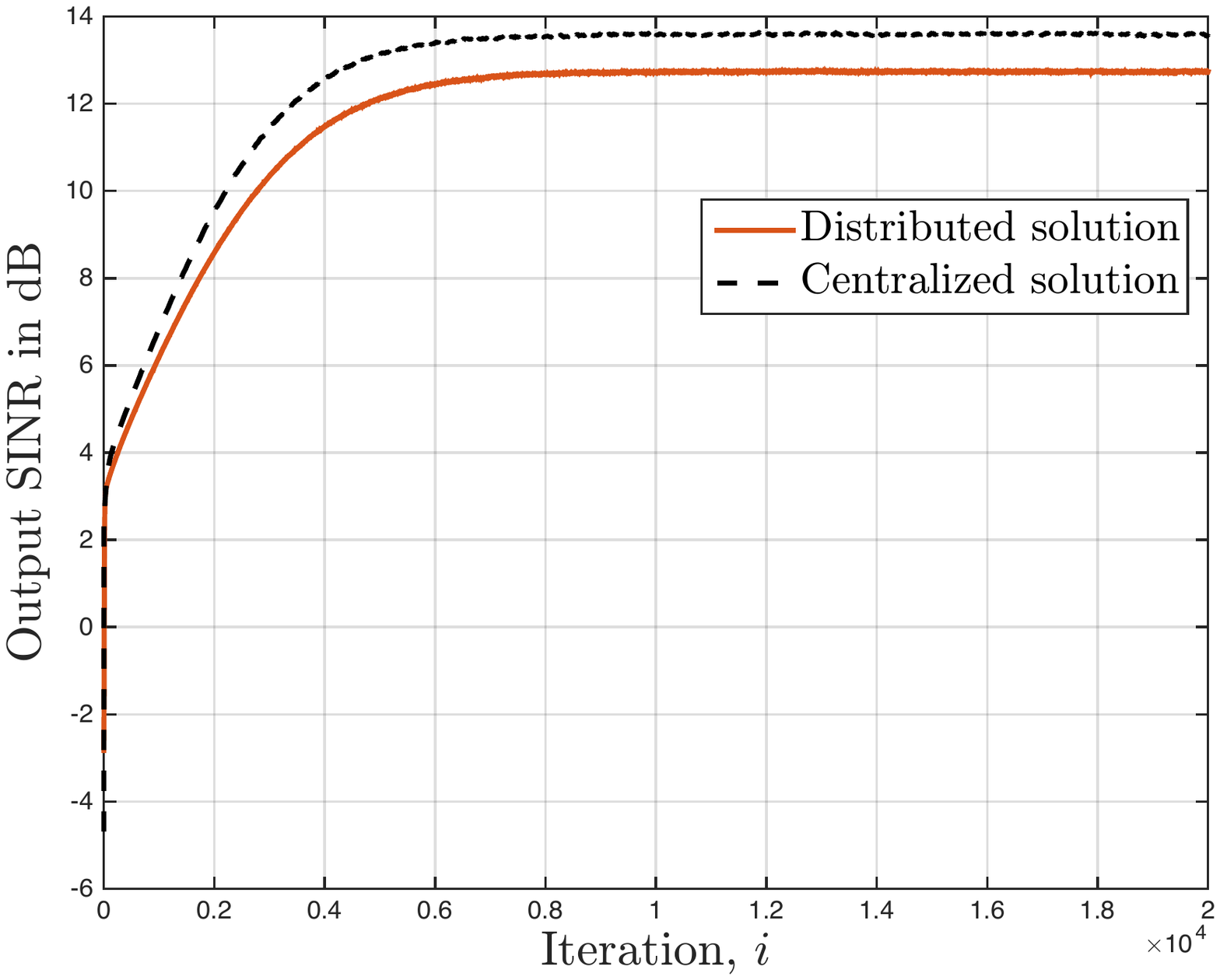}
\vspace{-2mm}
\caption{\emph{(Left)} Uniform linear array of $N$ antennas. \emph{(Right)} Comparison of output SINR.}
\label{fig: beamformer}
\end{center}
\vspace{-4mm}
\end{figure*}

Algorithm~\eqref{eq: distributed solution} can be applied to solve~\eqref{eq: distributed lcmv beamforming problem}. The signals $\{\bs_n(i)\}_{n=0}^2$ are i.i.d. zero-mean complex Gaussian random variables with variance $\sigma^2_{s,n}=1,\forall n$. The additive noise $\bv_i$ is zero-mean complex Gaussian with covariance $\expec\bv_i\bv_i^*=\sigma_v^2I_N$ ($\sigma_v=0.7$). We set $\nu=4$. The complex combination matrix  $\cA$  is set as the solution of the \cmag{feasibility problem~\eqref{eq: feasibility problem A} with the constraint $\rho(\cA-\cP_\cu)<1$ replaced by $\rho(\cA-\cP_\cu)\leq 1-\epsilon$ ($\epsilon=0.01$) and the constraint $\cA=\cA^*$ added\footnote{These changes make the problem convex--see~\cite[{Sec.~3}]{nassif2019adaptation} for further details.}}. The resulting problem is solved via CVX package~\cite{grant2014cvx}. Note that the distributed implementation is feasible \cmag{in this example}. We set $\mu=0.005$. 
 The output signal-to-interference-plus-noise ratio (SINR) given by $\expec\left[\frac{\sigma^2_{s,0}|\boldh_i^*a(\theta_0)|}{\boldh_i^*R_{z}\boldh_i}\right]$ with $R_z=\sum_{n=1}^2\sigma^2_{s,n}a(\theta_n)a^*(\theta_n)+\sigma^2_v I_N$ is illustrated in Fig.~\ref{fig: beamformer} (right).  The dashed black curve is the beampattern obtained by the centralized, also known as the constrained LMS~\cite{frost1972algorithm}, algorithm ($\mu=0.001$). The results are averaged over $1000$ Monte-Carlo runs. We observe that the distributed solution performs well compared to the centralized implementation.
 \vspace{-1mm}
 \section{Conclusion}
 \label{sec: conclusion}
In this work, we considered inference problems over networks where agents have individual parameter vectors to estimate subject to subspace constraints that require the parameters across the network to lie in low-dimensional subspaces. Based on the gradient projection algorithm, we proposed an iterative and distributed implementation of the projection step, which runs in parallel with the stochastic gradient descent update. We showed that, for small step-size parameter, the network is able to approach the minimizer of the constrained problem to arbitrarily good accuracy levels. 
\begin{appendices}
\vspace{-1mm}
\section{Proof of Lemma~\ref{lemm: matrix equation proposition}}
\label{app: proof of lemma 1}
First we prove sufficiency by proving that if $\cP_{\cu}$ is a projection matrix and $\cA$ satisfies conditions~\eqref{eq: first condition in the proposition},~\eqref{eq: second condition in the proposition}, and~\eqref{eq: third condition in the proposition}, then the matrix equation~\eqref{eq: condition 1 on A} holds. 
If $\cA$ satisfies~\eqref{eq: first condition in the proposition} and~\eqref{eq: second condition in the proposition}, then:
\begin{eqnarray}
\cA^i-\cP_{\cu}&\overset{\eqref{eq: first condition in the proposition}}{=}&\cA^i-\cA^i\cP_{\cu}\notag\\
&=&\cA^i(I-\cP_{\cu})\notag\\
&=&\cA^i(I-\cP_{\cu})^i\notag\\
&\overset{\eqref{eq: first condition in the proposition},\eqref{eq: second condition in the proposition}}{=}&(\cA(I-\cP_{\cu}))^i\notag\\
&\overset{\eqref{eq: first condition in the proposition}}{=}&(\cA-\cP_{\cu})^i
\end{eqnarray}
where we used the fact that $(I-\cP_{\cu})=(I-\cP_{\cu})^i$ since $(I-\cP_{\cu})$ is a projector. Applying condition~\eqref{eq: third condition in the proposition} and using the fact that for any matrix $B$, $\lim_{i\rightarrow\infty}B^i=0$ if and only if $\rho(B) < 1$, we obtain the desired convergence~\eqref{eq: condition 1 on A}. 

To prove necessity, we shall prove that every time we have~\eqref{eq: condition 1 on A}, we will have $\cP_{\cu}$  a projection matrix and conditions~\eqref{eq: first condition in the proposition},~\eqref{eq: second condition in the proposition}, and~\eqref{eq: third condition in the proposition} on $\cA$ satisfied. 
We use the fact that $\lim_{i\rightarrow\infty}\cA^i$ exists if, and only if, there is a non singular matrix $\cV$ such that~\cite{oldenburger1940}:
\begin{equation}
\label{eq: jordan decomposition of A}
\cA=\cV\left[
\begin{array}{cc}
I_K&0\\
0&\cJ
\end{array}
\right]\cV^{-1},
\end{equation}
where the spectral radius of $\cJ$ is less than one. Let $v_1,\ldots,v_{M}$ be the columns of $\cV$ and $y_1^*,\ldots,y_{M}^*$ be the rows of $\cV^{-1}$. Then, we have:
\begin{align}
\lim_{i\rightarrow\infty}\cA^i&=\lim_{i\rightarrow\infty}\cV\left[
\begin{array}{cc}
I_K&0\\
0&\cJ^i
\end{array}
\right]\cV^{-1}\notag\\
&=\cV\left[
\begin{array}{cc}
I_K&0\\
0&0
\end{array}
\right]\cV^{-1}=\sum_{m=1}^Kv_{m}y_m^*.\label{eq: jordan decomposition}
\end{align}
From the left hand-side of~\eqref{eq: condition 1 on A} and~\eqref{eq: jordan decomposition}, we obtain:
\begin{equation}
\label{eq: P R (U)}
\cP_{\cu}=\lim_{i\rightarrow\infty}\cA^i=\sum_{m=1}^Kv_{m}y_m^*\cblue{=\cV\left[
\begin{array}{cc}
I_K&0\\
0&0
\end{array}
\right]\cV^{-1}.}
\end{equation}
Observe from~\eqref{eq: jordan decomposition of A} that one is an eigenvalue of $\cA$ with multiplicity $K$ and $\{v_m,y_m\}_{m=1}^K$ are the associated right and left eigenvectors. Thus, from~\eqref{eq: P R (U)}, we obtain:
\begin{align}
\cA \cP_{\cu}=\cA\sum_{m=1}^Kv_{m}y_m^*=\sum_{m=1}^K v_{m}y_m^*=\cP_{\cu}\\
\cP_{\cu}\cA=\sum_{m=1}^Kv_{m}y_m^*\cA=\sum_{m=1}^K v_{m}y_m^*=\cP_{\cu}
\end{align}
and equations~\eqref{eq: first condition in the proposition} and~\eqref{eq: second condition in the proposition} hold. Moreover, from~\eqref{eq: jordan decomposition of A} and~\eqref{eq: P R (U)}, we obtain:
\begin{equation}
\rho\left(\cA-\cP_{\cu}\right)=\rho\left(\cV\left[
\begin{array}{cc}
0&0\\
0&\cJ
\end{array}
\right]\cV^{-1}\right)=\rho(\cJ)<1,
\end{equation}
which is condition~\eqref{eq: third condition in the proposition}. Finally, from~\eqref{eq: P R (U)}, we have:
\begin{equation}
\label{eq: P R (U) 3}
\cP_{\cu}^2=\cV\left[
\begin{array}{cc}
I_K&0\\
0&0
\end{array}
\right]\cV^{-1}=\cP_{\cu}.
\end{equation}
Thus, $\cP_\cu$ is a projector, which completes the necessity proof.

Since each $v_my_m^*$ is a rank-one matrix and their sum $\sum_{m=1}^{M}v_my_m^*=\cV\cV^{-1}=I$ has rank $M$, the matrix $\sum_{m=1}^Kv_my_m^*$ must have rank $K$. Since the rank of $\cP_{\cu}$ is equal to $P$, we obtain from~\eqref{eq: P R (U)} $K=P$. Thus, the matrix $\cA$ has $P=K$ eigenvalues at one and all other eigenvalues are strictly less than one.

\section{Driving term in algorithm~\eqref{eq: distributed solution affine constrained optimization problem}}
\label{app: affine constraints driving term}
Let $\cw_0$ denote an $M\times 1$ vector distributed across the network. In order to justify the choice of $(I-\cA) d_{\cd}$ in~\eqref{eq: distributed solution affine constrained optimization problem}, we consider the problem of finding the projection $\cw^o=\cP_{\cu}\cw_0+d_{\cd}$ in a distributed and iterative manner through a linear iteration of the form:
\begin{equation}
\label{eq: recursion 2 for simple case}
\cw_i=\cA\cw_{i-1}+\cB d_{\cd},
\end{equation}
where $\cA$ satisfies~\eqref{eq: condition 1 on A},~\eqref{eq: condition 2 on A} and $\cB$ is a properly chosen matrix  ensuring convergence toward $\cw^o$. Starting from $\cw_0$ and iterating the above recursion, we obtain:
\begin{equation}
\label{eq: recursion 0 for simple case}
\cw_i=\cA^i\cw_{0}+\sum_{j=0}^{i-1}\cA^j\cB d_{\cd}.
\end{equation}
If we let $i\rightarrow\infty$ on both sides of~\eqref{eq: recursion 0 for simple case}, we find:
\begin{equation}
\label{eq: recursion 1 for simple case}
\cw_{\infty}=\lim_{i\rightarrow\infty}\cw_i=\underbrace{\lim_{i\rightarrow\infty}\cA^i}_{\cP_{\cu}}\cw_{0}+\sum_{j=0}^{\infty}\cA^j\cB d_{\cd}.
\end{equation}
For $\cw_{\infty}$ to be equal to $\cw^o$, $\cB$ in~\eqref{eq: recursion 2 for simple case} must be chosen such that $\sum_{j=0}^{\infty}\cA^j\cB = I$. In the following, we show that $\cB=I-\cA+\cP_{\cu}$ ensures convergence. From the Jordan canonical form of $\cA$ introduced in~\eqref{eq: partitioning of cV-1}, we have:
\begin{align}
\cA^j&=\cV_{\epsilon}\left[\begin{array}{cc}
I_P&0\\
0&\cJ_{\epsilon}^i
\end{array}
\right]\cV_{\epsilon}^{-1},\\
 I-\cA+\cP_{\cu}&=\cV_{\epsilon}\left[\begin{array}{cc}
I_P&0\\
0&I-\cJ_{\epsilon}
\end{array}
\right]\cV_{\epsilon}^{-1}.
\end{align} 
If we multiply both terms and compute the infinite sum in~\eqref{eq: recursion 1 for simple case}, we obtain:
\begin{equation}
\sum_{j=0}^\infty \cA^j(I-\cA+\cP_{\cu})=\cV_{\epsilon}\left[\begin{array}{cc}
I_P&0\\
0&\sum_{j=0}^\infty\cJ_{\epsilon}^i(I-\cJ_{\epsilon})
\end{array}
\right]\cV_{\epsilon}^{-1}=I,
\end{equation} 
where we used the fact that $\sum_{j=0}^\infty\cJ_{\epsilon}^i=I-\cJ_{\epsilon}$ since $\rho(\cJ_{\epsilon})<1$. 

Now, since $\cP_{\cu}=\cP_{\cd}$ and $\cP_{\cd}d_{\cd}=0$, we obtain $\cB d_{\cd}=(I-\cA) d_{\cd}$, which justifies the choice in~\eqref{eq: distributed solution affine constrained optimization problem}.

\section{Proof of Lemma~\ref{lemm: jordan decomposition}}
\label{sec: jordan canonical decomposition}
We start by noting that the $M\times M$ matrix $\cA$ satisfying conditions~\eqref{eq: first condition on eigenvector},~\eqref{eq: second condition on eigenvector}, and~\eqref{eq: third condition in the proposition} admits a Jordan canonical decomposition of the form:
\begin{align}
\label{eq: jordan decomposition of cA}
\cA=\cV\Lambda\cV^{-1},\quad
\Lambda=\left[
\begin{array}{cc}
I_P&0\\
0&\cJ
\end{array}
\right],\quad\cV=\left[
\begin{array}{cc}
\cU&\cV_R
\end{array}
\right],\quad
\cV^{-1}=\left[
\begin{array}{c}
\cU^*\\
\cV_L^*
\end{array}
\right]
\end{align}
where the matrix $\cJ$ consists of Jordan blocks, with each one of them having the form (say for a Jordan block of size $3\times 3$):
\begin{equation}
\label{eq: jordan block}
\left[
\begin{array}{ccc}
\lambda&1&0\\
0&\lambda&1\\
0&0&\lambda
\end{array}
\right],
\end{equation}
where the eigenvalue $\lambda$ may be complex but has magnitude less than one. Let $\cE=\diag\{I_P,\epsilon,\epsilon^2,\ldots,\epsilon^{M-P}\}$ with $\epsilon>0$ any small positive number independent of $\mu$. The matrix $\cA$ in~\eqref{eq: jordan decomposition of cA} can be written alternatively as:
\begin{equation}
\label{eq: jordan canonical form of A}
\begin{split}
\cA=\underbrace{\cV\cE}_{\cV_{\epsilon}}\underbrace{\cE^{-1}\Lambda\cE}_{\Lambda_{\epsilon}}\underbrace{\cE^{-1}\cV^{-1}}_{\cV_{\epsilon}^{-1}}=\cV_{\epsilon}\Lambda_{\epsilon}\cV_{\epsilon}^{-1}
\end{split}
\end{equation}
where
\begin{align}
\Lambda_{\epsilon}=\left[
\begin{array}{cc}
I_P&0\\
0&\cJ_{\epsilon}
\end{array}
\right],~\quad\cV_{\epsilon}=\left[
\begin{array}{cc}
\cU&\cV_{R,\epsilon}
\end{array}
\right],~\quad\cV_{\epsilon}^{-1}=\left[
\begin{array}{c}
\cU^*\\
\cV_{L,\epsilon}^*
\end{array}
\right],\label{eq: partitioning of cV-1 2}
\end{align}
and where the matrix $\cJ_{\epsilon}$ consists of Jordan blocks, \cmag{with each one of them having a form similar as~\eqref{eq: jordan block} 
with $\epsilon>0$ appearing on the upper diagonal instead of $1$}, and where the eigenvalue $\lambda$ may be complex but has magnitude less than one.  Obviously, since $\cV_{\epsilon}^{-1}\cV_{\epsilon}=I_{M}$, it holds that:
\begin{align}
\cU^{*}\cV_{R,\epsilon}=0,\quad
\cV_{L,\epsilon}^{*}\cU=0,\quad
\cV_{L,\epsilon}^{*}\cV_{R,\epsilon}=I_{M-P},
\end{align}
and where $\cU^{*}\cU=I_{P}$ from Assumption~\ref{assump: matrix cU}. 

Now, let us consider the extended version of the matrix $\cA$, namely,  $\cA^e$, which is an $N\times N$ block matrix whose $(k,\ell)$-th block \cmag{is defined in Table~\ref{table: variables table}.}
 In the real data case, we have $\cA^e=\cA$. In the complex data case, it can be verified that $\cA^e$ is similar to the $2\times 2$ block diagonal matrix:
\begin{equation}
\label{eq: block diagonal form of A e}
\cA^d=\left[\begin{array}{cc}
\cA&0\\
0&(\cA^*)^\top
\end{array}\right]
\end{equation} 
according to:
\begin{equation}
\label{eq: similarity transformation of A e}
\cA^d=\cI\cA^e\cI^\top
\end{equation}
where $\cI$ is the permutation matrix defined by~\eqref{eq: permutation matrix block}. Using~\eqref{eq: jordan canonical form of A}, we can rewrite the second block in~\eqref{eq: block diagonal form of A e} as:
\begin{equation}
\label{eq: jordan canonical form of A conjugate}
(\cA^*)^\top=(\cV_{\epsilon}^*)^\top(\Lambda_{\epsilon}^*)^\top((\cV_{\epsilon}^{-1})^*)^\top=(\cV_{\epsilon}^*)^\top(\Lambda_{\epsilon}^*)^\top((\cV_{\epsilon}^*)^\top)^{-1}
\end{equation}
where $(\Lambda_{\epsilon}^*)^\top=\diag\{I_P,(\cJ_{\epsilon}^*)^\top\}$ and
\begin{equation}
(\cV_{\epsilon}^*)^\top=\left[
\begin{array}{cc}
(\cU^*)^\top&(\cV_{R,\epsilon}^*)^\top
\end{array}
\right],\quad\quad((\cV_{\epsilon}^{-1})^*)^\top=\left[
\begin{array}{c}
\cU^\top\\
\cV_{L,\epsilon}^\top
\end{array}
\right].
\end{equation}
Now, by replacing~\eqref{eq: jordan canonical form of A} and~\eqref{eq: jordan canonical form of A conjugate} into~\eqref{eq: block diagonal form of A e}, and by introducing the extended $2\times 2$ block diagonal matrices:
\begin{align}
\cV_{\epsilon}^d&=\diag\left\{\cV_{\epsilon},(\cV_{\epsilon}^*)^\top\right\},\label{eq: block diagonal matrix of eigenvectors}\\
\Lambda_{\epsilon}^d&=\diag\left\{\Lambda_{\epsilon},(\Lambda_{\epsilon}^*)^\top\right\},
\end{align}
we find that the $2M\times 2M$ matrix $\cA^d$ has a Jordan decomposition of the form:
\begin{equation}
\label{eq: eigendecomposition of Ad}
\cA^d=\cV^d_{\epsilon}\Lambda_{\epsilon}^d(\cV^d_{\epsilon})^{-1}. 
\end{equation}
Let us again introduce a permutation matrix $\cI'$ given by:
\begin{equation}
\label{eq: second permutation matrix}
\cI'\triangleq\left[
\begin{array}{cccc}
I_{P}&0&0&0\\
0&0&I_{P}&0\\
0&I_{M-P}&0&0\\
0&0&0&I_{M-P}
\end{array}
\right].
\end{equation} 
The matrix $\cA^d$ in~\eqref{eq: eigendecomposition of Ad} can be written alternatively as:
\begin{equation}
\cA^d=\underbrace{\cV^d_{\epsilon}\cI'^\top}_{\triangleq\cV^{d'}_{\epsilon}}\underbrace{\cI'\Lambda_{\epsilon}^d\cI'^\top}_{\triangleq\Lambda^e_{\epsilon}}\underbrace{\cI'(\cV^d_{\epsilon})^{-1}}_{(\cV^{d'}_{\epsilon})^{-1}} \label{eq: eigendecomposition of Ad 1}
\end{equation}
where the matrix $\Lambda_{\epsilon}^e$ is block diagonal defined in~\eqref{eq: partitioning of cVe-1}. Returning now to $\cA^e$, and using~\eqref{eq: similarity transformation of A e} and~\eqref{eq: eigendecomposition of Ad 1}, we find that the matrix $\cA^e$ has a Jordan decomposition of the form:
\begin{equation}
\cA^e=\cV^e_{\epsilon}\Lambda_{\epsilon}^e(\cV^e_{\epsilon})^{-1},
\end{equation}
where $\cV^e_{\epsilon}$ and $(\cV^e_{\epsilon})^{-1}$ are defined by:
\begin{equation}
\cV^e_{\epsilon}\triangleq\cI^\top\cV_{\epsilon}^d\cI'^\top,\qquad(\cV^e_{\epsilon})^{-1}\triangleq\cI'(\cV_{\epsilon}^d)^{-1}\cI
\end{equation}
in terms of the permutation matrices $\cI$ and $\cI'$ in~\eqref{eq: first permutation matrix},~\eqref{eq: second permutation matrix} and the block diagonal matrix $\cV_{\epsilon}^d$ in~\eqref{eq: block diagonal matrix of eigenvectors}. By properly evaluating these matrices, we arrive at~\eqref{eq: partitioning of cVe-1}.


\section{Proof of Theorem~\ref{theo: Network mean-square-error stability}}
\label{app: mean-square error stability}
We consider the transformed variables $\bcwb_{i}^e$ and $\bcwc_{i}^e$ in~\eqref{eq: transformed version of wt}. Conditioning both sides on $\bcF_{i-1}$, computing the conditional second-order moments, using the conditions from Assumption~\ref{assump: gradient noise} on the gradient noise process, and computing the expectations again, we get:
\begin{equation}
\begin{split}
\label{eq: variance of wci}
\expec\left\|\bcwb_{i}^e\right\|^2=\expec&\left\|(I_{hP}-\bcD_{11,i-1})\bcwb_{i-1}^e-\bcD_{12,i-1}\bcwc_{i-1}^e\right\|^2+\expec\left\|\bsb_{i}^e\right\|^2
\end{split}
\end{equation}
and
\begin{equation}
\label{eq: variance of wri}
\begin{split}
\expec\left\|\bcwc_{i}^e\right\|^2=\expec&\left\|(\cJ_{\epsilon}^e-\bcD_{22,i-1})\bcwc_{i-1}^e-\bcD_{21,i-1}\bcwb_{i-1}^e+\widecheck{b}^e\right\|^2+\expec\left\|\bsc_{i}^e\right\|^2.
\end{split}
\end{equation}
By applying Jensen's inequality to the convex function $\|x\|^2$, we can bound the first term on the RHS of~\eqref{eq: variance of wci} as follows:
\begin{align}
\expec\left\|(I_{hP}-\bcD_{11,i-1})\bcwb_{i-1}^e-\bcD_{12,i-1}\bcwc_{i-1}^e\right\|^2&=\expec\left\|(1-t)\frac{1}{1-t}(I_{hP}-\bcD_{11,i-1})\bcwb_{i-1}^e\hspace{-0.5mm}-t\frac{1}{t}\bcD_{12,i-1}\bcwc_{i-1}^e\right\|^2\notag\\
&\leq\frac{1}{1-t}\expec\left\|(I_{hP}-\bcD_{11,i-1})\bcwb_{i-1}^e\right\|^2+\frac{1}{t}\expec\left\|\bcD_{12,i-1}\bcwc_{i-1}^e\right\|^2\notag\\
&\leq\frac{1}{1-t}\expec\left[\left\|I_{hP}-\bcD_{11,i-1}\right\|^2\left\|\bcwb_{i-1}^e\right\|^2\right]+\frac{1}{t}\expec\left[\left\|\bcD_{12,i-1}\right\|^2\left\|\bcwc_{i-1}^e\right\|^2\right]\label{eq: first term on the RHS}
\end{align}
for any arbitrary positive number $t\in(0,1)$. By Assumption~\ref{assump: risks}, the Hermitian matrix $\bH_{k,i-1}$ defined in~\eqref{eq: Hki-1} can be bounded as follows:
\begin{equation}
\label{eq: bound on Hki-1}
0\leq\frac{\nu_k}{h}I_{hM_k}\leq\bH_{k,i-1}\leq\frac{\delta_k}{h}I_{hM_k}.
\end{equation}
Using the fact that the integral of a matrix is the matrix of the integrals, and  the linear property of integration, the Hermitian block $\bcD_{11,i-1}$ in~\eqref{eq: D11} can be rewritten as:
\begin{equation}
\bcD_{11,i-1}\hspace{-1mm}=\hspace{-0.5mm}\mu\int_{0}^1\hspace{-1mm}\left[(\cU^e)^*\diag\left\{\nabla_{w_k}^2J_k(w^o_k-t\bwt_{k,i-1})\right\}_{k=1}^N\cU^e\right] \hspace{-1mm}dt
\end{equation}
and, therefore, from Assumption~\ref{assump: risks}, $\bcD_{11,i-1}$ can be bounded as follows: 
\begin{equation}
\label{eq: bound on D_{11,i-1}}
0<\mu\frac{\nu}{h}I_{hP}\leq\bcD_{11,i-1}\leq\mu\frac{\delta}{h}I_{hP},
\end{equation}
for some positive constants $\nu$ and $\delta$ that are independent of $\mu$ and $i$. In terms of the $2-$induced matrix norm (i.e., maximum singular value), we obtain:
\begin{align}
\|I_{hP}-\bcD_{11,i-1}\|&=\rho(I_{hP}-\bcD_{11,i-1})\leq\max\left\{\left|1-\mu\frac{\delta}{h}\right|,\left|1-\mu\frac{\nu}{h}\right|\right\},
\end{align}
and, therefore,
\begin{equation}
\|I_{hP}-\bcD_{11,i-1}\|^2\leq(1-\mu\sigma_{11})^2,\label{eq: bound on I-D11 2}
\end{equation}
for some positive constant $\sigma_{11}$ that is independent of $\mu$ and $i$. 

Similarly, using the $2-$induced matrix norm (i.e., maximum singular value),  we can bound $\|\bcD_{12,i-1}\|^2$ as follows:
\begin{eqnarray}
\|\bcD_{12,i-1}\|^2&\overset{\eqref{eq: D12}}{=}&\|\mu\,(\cU^e)^*\bcH_{i-1}\cV_{R,\epsilon}^e\|^2\notag\\
&\leq&\mu^2\|(\cU^e)^*\|^2\|\bcH_{i-1}\|^2\|\cV_{R,\epsilon}^e\|^2\notag\\
&\leq&\mu^2\left(\max_{1\leq k\leq N}\|\bH_{k,i-1}\|^2\right)\|\cV_{R,\epsilon}^e\|^2\notag\\
&\overset{\eqref{eq: bound on Hki-1}}{\leq}&\mu^2\|\cV_{R,\epsilon}^e\|^2\max_{1\leq k\leq N}\left\{\frac{\delta_k^2}{h^2}\right\}\notag\\
&=&\mu^2\sigma_{12}^2,\label{eq: bound on D12}
\end{eqnarray}
for some \cred{positive} constant $\sigma_{12}$ and where we used the fact that $\|(\cU^e)^*\|=\sigma_{\max}((\cU^e)^*)=\sqrt{\lambda_{\max}(\cU^e(\cU^e)^*)}=1$.

Substituting~\eqref{eq: first term on the RHS} into~\eqref{eq: variance of wci}, and using~\eqref{eq: bound on I-D11 2},~\eqref{eq: bound on D12}, we get:
\begin{equation}
\label{eq: variance of wci 1}
\expec\|\bcwb_{i}^e\|^2\leq\frac{(1-\sigma_{11}\mu)^{2}}{1-t}\expec\|\bcwb_{i-1}^e\|^2+\frac{\mu^2\sigma_{12}^2}{t}\expec\|\bcwc_{i-1}^e\|^2+\expec\|\bsb_{i}^e\|^2
\end{equation}
We select $t=\sigma_{11}\mu$ (for sufficiently small $\mu$). Then, the previous inequality can be written as:
\begin{equation}
\label{eq: variance of wci 2}
\expec\|\bcwb_{i}^e\|^2\leq(1-\sigma_{11}\mu)\expec\|\bcwb_{i-1}^e\|^2+\frac{\mu\sigma_{12}^2}{\sigma_{11}}\expec\|\bcwc_{i-1}^e\|^2+\expec\|\bsb_{i}^e\|^2.
\end{equation}

We  repeat  similar arguments for the second variance relation~\eqref{eq: variance of wri}. Using Jensen's inequality again, \cmag{we obtain:
\begin{equation}
\label{eq: eq: variance of wri 3}
\begin{split}
&\expec\|(\cJ_{\epsilon}^e-\bcD_{22,i-1})\bcwc_{i-1}^e-\bcD_{21,i-1}\bcwb_{i-1}^e+\widecheck{b}^e\|^2\\
&\leq\frac{1}{t}\expec\|\cJ_{\epsilon}^e\bcwc_{i-1}^e\|^2+\frac{1}{1-t}\expec\|\bcD_{22,i-1}\bcwc_{i-1}^e+\bcD_{21,i-1}\bcwb_{i-1}^e-\widecheck{b}^e\|^2\\
&\overset{\text{(a)}}{\leq}\frac{(\rho(\cJ_{\epsilon})+\epsilon)^2}{t}\expec\|\bcwc^e_{i-1}\|^2+\frac{1}{1-t}\expec\|\bcD_{22,i-1}\bcwc^e_{i-1}+\bcD_{21,i-1}\bcwb^e_{i-1}-\widecheck{b}^e\|^2\\
&\overset{\text{(b)}}= (\rho(\cJ_{\epsilon})+\epsilon)\expec\|\bcwc^e_{i-1}\|^2+\frac{1}{1-\rho(\cJ_{\epsilon})-\epsilon}\expec\|\bcD_{22,i-1}\bcwc^e_{i-1}+\bcD_{21,i-1}\bcwb^e_{i-1}-\widecheck{b}^e\|^2,
\end{split}
\end{equation}
for any arbitrary positive number $t\in(0,1)$. In (a) we used the fact that the block diagonal matrix $\cJ^e_\epsilon$ defined in \cmag{Table~\ref{table: variables table}} satisfies:
\begin{equation}
\label{eq: norm 2 J 1}
 \|\cJ_{\epsilon}^e\|^2\leq(\rho(\cJ_{\epsilon})+\epsilon)^2.
\end{equation}
Expression~\eqref{eq: norm 2 J 1} can be established by using similar arguments as in~\cite[pp.~516]{sayed2014adaptation} and the fact that 
$\lambda_{\max}\left(\cJ_\epsilon^\top(\cJ_\epsilon^*)^\top\right)=\lambda_{\max}\left((\cJ_\epsilon^*\cJ_\epsilon)^\top\right)=\lambda_{\max}(\cJ_\epsilon^*\cJ_\epsilon)$.
In (b), we used the fact that $\rho(\cJ_{\epsilon})\in{(0},1)$, and thus, $\epsilon$  can be selected small enough to ensure $\rho(\cJ_{\epsilon})+\epsilon\in(0,1)$.  We then selected $t=\rho(\cJ_{\epsilon})+\epsilon$.
}
Using Jensen's inequality, the second term on the RHS of~\eqref{eq: eq: variance of wri 3} can be bounded as follows:
\begin{equation}
\label{eq: eq: variance of wri 4}
\begin{split}
&\expec\|\bcD_{22,i-1}\bcwc^e_{i-1}+\bcD_{21,i-1}\bcwb^e_{i-1}-\widecheck{b}^e\|^2\leq 3\expec[\|\bcD_{22,i-1}\|^2\|\bcwc^e_{i-1}\|^2]+3\expec[\| \bcD_{21,i-1}\|^2\|\bcwb^e_{i-1}\|^2]+3\|\widecheck{b}^e\|^2
\end{split}
\end{equation}
Following similar arguments as in~\eqref{eq: bound on D12}, we can show that: 
\begin{align}
\|\bcD_{21,i-1}\|^2
\leq\mu^2\sigma_{21}^2,\qquad \|\bcD_{22,i-1}\|^2
\leq\mu^2\sigma_{22}^2,\label{eq: bound on norm D22}
\end{align}
for some \cred{positive} constants $\sigma_{21}$ and $\sigma_{22}$. \cmag{Substituting~\eqref{eq: eq: variance of wri 4} into~\eqref{eq: eq: variance of wri 3} and~\eqref{eq: eq: variance of wri 3} into~\eqref{eq: variance of wri}, and using~\eqref{eq: bound on norm D22}, we obtain:}
\begin{equation}
\label{eq: variance of wri 6}
\begin{split}
&\expec\|\bcwc_{i}^e\|^2\\
&\leq\left(\rho(\cJ_{\epsilon})+\epsilon+\frac{3\mu^2\sigma_{22}^2}{1-\rho(\cJ_{\epsilon})-\epsilon}\right)\expec\|\bcwc_{i-1}^e\|^2+\left(\frac{3\mu^2\sigma_{21}^2}{1-\rho(\cJ_{\epsilon})-\epsilon}\right)\expec\|\bcwb^e_{i-1}\|^2+\left(\frac{3}{1-\rho(\cJ_{\epsilon})-\epsilon}\right)\|\widecheck{b}^e\|^2+\expec\|\bsc_{i}^e\|^2.
\end{split}
\end{equation}
From~\eqref{eq: transformed version of the bias}, we have:
\begin{equation}
\label{eq: norm of br}
\|\widecheck{b}^e\|^2=\|\mu\,\cJ_{\epsilon}^e(\cV_{L,\epsilon}^{e})^*b^e\|^2\leq\mu^2(\rho(\cJ_{\epsilon})+\epsilon)^2\|(\cV_{L,\epsilon}^{e})^*\|^2\|b^e\|^2
\end{equation}
where we used the fact that $(\cV_{L,\epsilon}^{e})^*\cA^e=\cJ_{\epsilon}^e(\cV_{L,\epsilon}^{e})^*$ from Lemma~\ref{lemm: jordan decomposition}. Since $b^e$ in~\eqref{eq: expression of b 1} is defined in terms of the gradient $\nabla_{w^*_k}J_k(w^o_k)$ and since $J_k(w_k)$ is twice differentiable, then $\|b^e\|^2$ is bounded and we obtain $\|\widecheck{b}^e\|^2=O(\mu^2)$. For the noise terms  $\expec\|\bsb_{i}^e\|^2$ in~\eqref{eq: variance of wci 2} and $\expec\|\bsc_{i}^e\|^2$ in~\eqref{eq: variance of wri 6}, we have:
\begin{equation}
\label{eq: sum of gradient noise variances}
\expec\|\bsb_{i}^e\|^2+\expec\|\bsc_{i}^e\|^2\overset{\eqref{eq: transformed version of the noise}}=\expec\|\mu(\cV_{\epsilon}^e)^{-1}\cA^e\bs_i^e\|^2\leq\mu^2 v_1^2\expec\|\bs_i^e\|^2,
\end{equation}
where $v_1$ is a positive constant independent of $\mu$ and given by $v_1\triangleq\|(\cV_{\epsilon}^e)^{-1}\cA^e\|=\|\Lambda_\epsilon^e(\cV_{\epsilon}^e)^{-1}\|$. In terms of the variances of the individual noise processes, $\expec\|\bs_{k,i}\|^2$, we have 
$\expec\|\bs_i^e\|^2=\sum_{k=1}^N\expec\|\bs^e_{k,i}\|^2=2\sum_{k=1}^N\expec\|\bs_{k,i}\|^2$.
For each   $\bs_{k,i}(\bw_{k,i-1})$, we have from Assumption~\ref{assump: gradient noise} and Jensen's inequality:
\begin{align}
\expec\|\bs_{k,i}(\bw_{k,i-1})\|^2&\leq (\beta_k^2/h^2)\expec\|w^o_k-\bw_{k,i-1}+w^o_k\|^2+\sigma^2_{s,k}\notag\\
&\leq2(\beta_k^2/h^2)\expec\|\bwt_{k,i-1}\|^2+2(\beta_k^2/h^2)\|w^o_k\|^2+\sigma^2_{s,k}\notag\\
&=\bar{\beta}_k^2\expec\|\bwt_{k,i-1}\|^2+\bar{\sigma}_{s,k}^2\label{eq: second order moment of gradient noise and error}
\end{align}
where $\bar{\beta}_k^2\triangleq2(\beta_k^2/h^2)$ and $\bar{\sigma}_{s,k}^2\triangleq2(\beta_k^2/h^2)\|w^o_k\|^2+\sigma^2_{s,k}$. The term $\expec\|\bs_i^e\|^2$ can thus be bounded as follows:
\begin{align}
\expec\|\bs_i\|^2&\leq2\sum_{k=1}^N\bar{\beta}_k^2\expec\|\bwt_{k,i-1}\|^2+2\sum_{k=1}^N\bar{\sigma}_{s,k}^2\notag\\
&\leq \beta^2_{\max}\expec\|\cV_{\epsilon}^e(\cV^e_{\epsilon})^{-1}\bcwt^e_{i-1}\|^2+\sigma^2_s\notag\\
&\leq\beta^2_{\max}\|\cV^e_{\epsilon}\|^2\expec\|(\cV^e_{\epsilon})^{-1}\bcwt^e_{i-1}\|^2+\sigma^2_s\notag\\
&=\beta^2_{\max}v_2^2\left[\expec\|\bcwb_{i-1}^e\|^2+\expec\|\bcwc_{i-1}^e\|^2\right]+\sigma^2_s
\end{align}
where $\beta^2_{\max}\triangleq\max_{1\leq k\leq N}\bar{\beta}_k^2$, $\sigma^2_s\triangleq2\sum_{k=1}^N\bar{\sigma}_{s,k}^2$, and $v_2\triangleq\|\cV_{\epsilon}^e\|$. Substituting into~\eqref{eq: sum of gradient noise variances}, we get:
\begin{equation}
\label{eq: sum of gradient noise variances 1}
\begin{split}
&\expec\|\bsb_{i}^e\|^2+\expec\|\bsc_{i}^e\|^2\leq\mu^2 v_1^2\beta^2_{\max}v_2^2[\expec\|\bcwb^e_{i-1}\|^2+\expec\|\bcwc^e_{i-1}\|^2]+\mu^2 v_1^2\sigma^2_s.
\end{split}
\end{equation}
Using this bound in~\eqref{eq: variance of wci 2} and~\eqref{eq: variance of wri 6}, we obtain:
\begin{equation}
\label{eq: variance of wci 3}
\begin{split}
\expec\|\bcwb^e_{i}\|^2\leq&(1-\sigma_{11}\mu+\mu^2 v_1^2\beta^2_{\max}v_2^2)\expec\|\bcwb^e_{i-1}\|^2+\left(\frac{\mu\sigma_{12}^2}{\sigma_{11}}+\mu^2 v_1^2\beta^2_{\max}v_2^2\right)\expec\|\bcwc^e_{i-1}\|^2+\mu^2 v_1^2\sigma^2_s,
\end{split}
\end{equation}
\begin{equation}
\label{eq: variance of wri 7}
\begin{split}
\expec\|\bcwc^e_{i}\|^2\leq&\left(\frac{3\mu^2\sigma_{21}^2}{1-\rho(\cJ_{\epsilon})-\epsilon}+\mu^2 v_1^2\beta^2_{\max}v_2^2\right)\expec\|\bcwb^e_{i-1}\|^2+\left(\rho(\cJ_{\epsilon})+\epsilon+\frac{3\mu^2\sigma_{22}^2}{1-\rho(\cJ_{\epsilon})-\epsilon}+\mu^2 v_1^2\beta^2_{\max}v_2^2\right)\expec\|\bcwc^e_{i-1}\|^2\\
&+\left(\frac{3}{1-\rho(\cJ_{\epsilon})-\epsilon}\right)\|\widecheck{b}^e\|^2+\mu^2 v_1^2\sigma^2_s.
\end{split}
\end{equation}
We can combine~\eqref{eq: variance of wci 3} and~\eqref{eq: variance of wri 7} into a single inequality recursion:
\begin{equation}
\label{eq: error variance recursion}
\left[
\begin{array}{c}
\expec\|\bcwb^e_{i}\|^2\\
\expec\|\bcwc^e_{i}\|^2
\end{array}
\right]\preceq
\Gamma
\left[
\begin{array}{c}
\expec\|\bcwb^e_{i-1}\|^2\\
\expec\|\bcwc^e_{i-1}\|^2
\end{array}
\right]+
\left[
\begin{array}{c}
e\\
e+f
\end{array}
\right]
\end{equation}
where $\Gamma$ is given by:
\begin{equation}
\Gamma\triangleq\left[
\begin{array}{cc}
a&b\\
c&d
\end{array}
\right]=\left[
\begin{array}{cc}
1-O(\mu)&O(\mu)\\
O(\mu^2)&\rho(\cJ_{\epsilon})+\epsilon+O(\mu^2)
\end{array}
\right].
\end{equation}
and where $a=1-O(\mu)$, $b=O(\mu)$, $c=O(\mu^2)$, $d=\rho(\cJ_{\epsilon})+\epsilon+O(\mu^2)$, $e=O(\mu^2)$, and $f=O(\mu^2)$. Now, using the property that the spectral radius of a matrix is upper bounded by its  $1-$norm norm, we obtain:
\begin{equation}
\label{eq: stability of gamma condition}
\rho(\Gamma)\leq\max\left\{1-O(\mu)+O(\mu^2),\rho(\cJ_{\epsilon})+\epsilon+O(\mu)+O(\mu^2)\right\}
\end{equation}
Since $\rho(\cJ_{\epsilon})<1$ is independent of $\mu$, and since $\epsilon$ and $\mu$ are small positive numbers that can be chosen arbitrarily small and independently of each other, it is clear that the RHS of the above expression can be made strictly smaller than one for sufficiently small $\epsilon$ and $\mu$. In that case $\rho(\Gamma)<1$ so that $\Gamma$ is stable. Moreover, it holds that:
\begin{align}
\label{eq: inverse of I-Gamma}
(I-\Gamma)^{-1}&=\left[
\begin{array}{cc}
O(1/\mu)&O(1)\\
O(\mu)&O(1)
\end{array}
\right].
\end{align}
Now, by iterating~\eqref{eq: error variance recursion} we arrive at:
\begin{align}
\limsup_{i\rightarrow\infty}\left[
\begin{array}{c}
\expec\|\bcwb^e_{i}\|^2\\
\expec\|\bcwc^e_{i}\|^2
\end{array}
\right]&\preceq(I-\Gamma)^{-1}\left[
\begin{array}{c}
e\\
e+f
\end{array}
\right]
=\left[
\begin{array}{c}
O(\mu)\\
O(\mu^2)
\end{array}
\right]
\end{align}
from which we conclude that $\limsup_{i\rightarrow\infty}\expec\|\bcwb^e_{i}\|^2=O(\mu)$ and $\limsup_{i\rightarrow\infty}\expec\|\bcwc^e_{i}\|^2=O(\mu^2)$. Therefore,
\begin{align}
\limsup_{i\rightarrow\infty}\expec\|\bcwt^e_{i}\|^2&=\limsup_{i\rightarrow\infty}\expec\left\|\cV_{\epsilon}^e\left[\begin{array}{c}\bcwb^e_{i}\\\bcwc^e_{i}\end{array}\right]\right\|^2\notag\\
&\leq\limsup_{i\rightarrow\infty}v_2^2\left[\expec\|\bcwb^e_{i}\|^2+\expec\|\bcwc^e_{i}\|^2\right]=O(\mu).\label{eq: final result in the proof of mean-square stability}
\end{align}

\end{appendices}

\bibliographystyle{IEEEbib}
{\balance{
\bibliography{reference}}}

\end{document}